\newtheorem{xdefinition}{Definition}
\newtheorem{xobservation}{Observation}
\newtheorem{xtheorem}{Theorem}
\newtheorem{xlemma}{Lemma}
\newtheorem{xproposition}{Proposition}
\newtheorem{xcorollary}{Corollary}
\newenvironment{definition}{\begin{xdefinition}\rm}%
{\hspace*{\fill}\raisebox{-1pt}{\boldmath$\Box$}\end{xdefinition}}
{\hspace*{\fill}\raisebox{-1pt}{\boldmath$\Box$}\end{xobservation}}
\newenvironment{theorem}{\begin{xtheorem}\rm}{\end{xtheorem}}
\newenvironment{lemma}{\begin{xlemma}\rm}{\end{xlemma}}
\newenvironment{proposition}{\begin{xproposition}\rm}{\end{xproposition}}
\newenvironment{corollary}{\begin{xcorollary}\rm}{\end{xcorollary}}
\newenvironment{proof}{\begin{trivlist}\item[]{\bf Proof }}%
{\hspace*{\fill}\raisebox{-1pt}{\boldmath$\Box$}\end{trivlist}}
\newcommand{\bbF}{\mathbb{F}}
\newcommand{\mapnbit}{\colon \bbF_2^n\rightarrow \bbF_2}
\DeclareMathOperator{\spa}{span}
\journal{Theoretical Computer Science}
\begin{document}
\begin{frontmatter}
 \title{Multiplicative Complexity of Vector Valued Boolean Functions}

\author[IMADA]{Joan Boyar\fnref{fn1}}
\ead{joan@imada.sdu.dk}

\author[IMADA]
{Magnus Gausdal Find}
\ead{magnus@gausdalfind.dk}

\address[IMADA]{Department of Mathematics and Computer Science,
University of Southern Denmark,
 Campusvej 55, DK-5230 Odense M, Denmark}

\fntext[fn1]{Corresponding author.}

\begin{keyword}
Multiplicative complexity \sep
Nonlinearity \sep
Circuits \sep
Error correcting codes 
\end{keyword}
\begin{abstract}
We consider the multiplicative complexity of Boolean functions with multiple bits of output, studying how large a multiplicative complexity is necessary and sufficient to provide a desired nonlinearity. 
For so-called $\Sigma\Pi\Sigma$ circuits, we show that 
there is a tight connection between error correcting codes and circuits
computing functions with high nonlinearity.
Combining this with known coding theory results, we show that functions with $n$ inputs and $n$ outputs with the highest possible nonlinearity must have at least $2.32n$ AND gates.  
We further show that one cannot prove stronger lower bounds
by only appealing to the nonlinearity of a function;
we show a bilinear circuit
computing a function with almost optimal nonlinearity with
the number of AND gates being exactly the length of such a shortest code.

Additionally we provide a function which, for general circuits, has multiplicative complexity at least $2n-3$. 

Finally we study the multiplicative complexity of ``almost all'' functions. We show that every function with $n$ bits of input and $m$ bits of output can be computed using at most $2.5(1+o(1))\sqrt{m2^n}$ AND gates.
\end{abstract}
\begin{keyword}
Multiplicative complexity \sep
Nonlinearity \sep
Circuits \sep
Error correcting codes \sep
\end{keyword}

\end{frontmatter}

\section{Introduction}
Cryptographic functions such as encryption functions should have
high nonlinearity to be resistant against linear and differential
attacks (see again \cite{carletvectorial} and the references therein). This is an explicit design criteria for modern cryptographic systems, such as
AES,~\citep{daemen2002design}, which has been used as a benchmark
for several implementations of homomorphic encryption.

In several settings, such as homomorphic encryption and secure multiparty computation (see e.g. \cite{DBLP:conf/focs/Vaikuntanathan11} and \cite{kolesnikov}), for practicality/efficiency, the number of AND gates is significantly more important than the number of XOR gates, hence one is interested in functions with as few AND gates as possible. For many such protocols, it is not just the \emph{number} of AND gates that matters, but also the \emph{AND depth}. For example, in several protocols for secure multiparty computation the number of AND gates is proportional to the number of bits sent, and the AND depth corresponds to the number of rounds in the protocol (see e.g. \cite{LP13}), and in typical protocols for homomorphic encryption, the norm of the noise
after an AND gate is the product of the norms of the noise from the inputs, so the AND depth greatly affects the number of expensive bootstrappings, relinearizations, and/or modulus reductions which are necessary (see e.g. \cite{DSES14}). For more examples we refer to \cite{ARSTZ15} and the references therein.

A natural question to ask is how the nonlinearity of a function and its multiplicative complexity (the number of AND gates necessary to compute it when only AND, NOT and XOR gates are used) are related to each other: how large does one measure need to be in order for the other to have at least a certain value? 
As stated in Section \ref{sec:defsnprelims}, for every desired nonlinearity,
it is known exactly how many AND gates are necessary and sufficient for 
functions with only one output to achieve this.
We study this same question for functions with multiple bits
 of output.

\subsection{Definitions and Preliminaries}
\label{sec:defsnprelims}
Let $\bbF_2$ be the finite field of order $2$ and $\bbF_2^n$
the $n$-dimensional vector space over $\bbF_2$.

We denote by $[n]$ the set $\{ 1, \ldots, n \}$.
An $(n,m)$-function is a mapping from
$\bbF_2^n$ to $\bbF_2^m$ and we refer to these
as the  \emph{Boolean functions}. When $m>1$ we say that the function is \emph{vector valued}.

It is well known that every $(n,1)$-function $f$ can be
written uniquely as a multilinear polynomial over $\bbF_2$
\[
f({x}_1,\ldots ,{x}_n)=
\sum_{X\subseteq [n]}\alpha_{X}\prod_{i\in X}{x}_i,
\]
where $\alpha_X\in \{ 0,1\}$ for subsets of indices.
This polynomial is called the \emph{Zhegalkin polynomial} or the
\emph{algebraic normal form} (ANF) of $f$. For the rest of this paper
most, but not all, arithmetic will be in $\bbF_2$.
We trust that the reader will find it clear
whether arithmetic is in $\bbF_2$, $\bbF_{2^n}$, or $\mathbb{R}$
when not explicitly stated, and will not address it further.

The \emph{degree} of an $(n,1)$-function $f$
is the largest $|X|$ such that $\alpha_X=1$. 
For  an $(n,m)$-function $f$, we let $f_i$ be the $(n,1)$-function
defined by the $i$th output bit of $f$, and say that the degree
of $f$ is the largest degree of $f_i$ for $i\in [m]$.
A function is \emph{affine} if it has degree $1$,
and \emph{quadratic} if it has degree $2$.
For $T\subseteq [m]$
we let
\[
f_T=\sum_{i\in T}f_i,
\]
and for  $\mathbf{v}\in \bbF_2^n$ we let $|\mathbf{v}|$ denote the 
\emph{Hamming weight} of $\mathbf{v}$, that is, the number of nonzero
entries in $\mathbf{v}$, and let $|\mathbf{u}+\mathbf{v}|$ be the
\emph{Hamming distance} between
the two vectors $\mathbf{u}$ and $\mathbf{v}$.

\paragraph{Nonlinearity of Boolean Functions}
We will use several facts on the nonlinearity of Boolean functions. We
refer to the two chapters in~\cite{carletpreds,carletvectorial} for proofs and references.

The \emph{nonlinearity} of an $(n,1)$-function $f$
is the Hamming
distance to the closest affine function, more precisely
\[
NL(f)= 
2^{n}- \max_{\mathbf{a}\in \bbF_2^n,b\in \bbF_2}|\{\mathbf{x}\in\bbF_2^n |
\left\langle \mathbf{a}, \mathbf{x}\right\rangle+b=f(\mathbf{x})\}|,
\]
where $\left\langle \mathbf{a}, \mathbf{x}\right\rangle=\sum_{i=1}^n a_i x_i$.
For an $(n,m)$-function $f$, the nonlinearity is defined as
\[
NL(f) = \min_{T\subseteq [m],T\neq \emptyset}\{NL(f_T)\}.
\]
The nonlinearity of an $(n,m)$-function is always
between $0$ and $2^{n-1}-2^{\frac{n}{2}-1}$.
The $(n,m)$-functions which meet the upper bound
are called \emph{bent functions}. Bent $(n,1)$ functions exist
if and only if
$n$ is even.
A standard example 
of a bent $(n,1)$-function is the \emph{inner product}, on $n=2k$
variables, defined as:
\[ 
IP_{2k}(
x_1,\ldots , x_{k}
,
y_1,\ldots , y_{k})
=
\left\langle \mathbf{x}, \mathbf{y}\right\rangle.
\]
This function is clearly quadratic.
If we identify $\bbF_2^n$ with $\bbF_{2^n}$, a standard example of a
bent $(2n,n)$-function is the \emph{finite field multiplication} function:
\begin{equation} 
\label{eq:finitefieldmult}
 f(\mathbf{x},\mathbf{y})=\mathbf{x} \cdot \mathbf{y}
\end{equation}
where multiplication is in $\bbF_{2^n}$. 

If $n=m$, $NL(f)$ is at most $2^{n-1}-2^{\frac{n-1}{2}}$, see \cite{CV94}.  Functions meeting this bound
are called \emph{almost bent}. These exist only for odd $n$.
As remarked by Carlet, this name is a bit misleading since the name indicates
that they are suboptimal, which they are not. Again, if
 we identify
$\bbF_2^n$ and $\bbF_{2^n}$, for
$1\leq i\leq \frac{n-1}{2}$ and $gcd(i,n)=1$, the so called \emph{Gold functions}, $G(x)$, defined
as
\begin{equation}
\label{eq:almostbentexample}
G(\mathbf{x})=\mathbf{x}^{2^{i}+1}=\mathbf{x}\cdot \left (\mathbf{x}^{2^{i}}\right)
\end{equation}
are almost bent. This function is quadratic since the $\bbF_{2^n}$-operator $\mathbf{x}\mapsto \mathbf{x}^2$ is affine
in when considered as an operator on $\bbF_2^n$, and each output bit of
finite field multiplication is quadratic in the inputs,
see also \cite{carletvectorial}.

\paragraph{Multiplicative Complexity and Circuit Classes}
In this paper we consider multiple classes of circuits:

\begin{itemize}
\item
An \emph{XOR-AND circuit} is a Boolean circuit where each of the gates is either $\oplus$ (XOR, addition in $\bbF_2$), $\wedge$
(AND, multiplication in $\bbF_2$) or the constant $1$. The XOR gates may have unbounded fanin, and the AND gates have fanin $2$;
\item a $\Sigma\Pi\Sigma$ circuit is a circuit containing three layers of gates. The first layer contains XOR gates, the second contains AND gates and the third contains XOR gates. All gates are allowed unbounded fanin;
\item
a circuit is \emph{quadratic} if it is both a $\Sigma\Pi\Sigma$ circuit and an XOR-AND circuit. This is equivalent to saying that all gates in the circuit compute functions of degree at most 2;
\item
a quadratic circuit is \emph{bilinear} if the input
is partitioned into two sets, and each input to an AND gate is a 
linear combination of variables from one of these two sets, with
the other input using the opposite set of the partition.
\end{itemize}
The \emph{multiplicative complexity} of an $(n,m)$-function with respect to a circuit class $\mathcal{C}$ is the smallest number of AND gates in any circuit from $\mathcal{C}$ computing $f$. If the circuit class $\mathcal{C}$ is not specified, we refer to the multiplicative complexity with respect to XOR-AND circuits.

We notice the hierarchy between the circuits: any bilinear circuit is a quadratic circuit and any quadratic circuit is both an $\Sigma\Pi\Sigma$ and an XOR-AND circuit.

Some functions have higher multiplicative complexity with respect to XOR-AND circuits than to $\Sigma\Pi\Sigma$ circuits. Examples of this include the function $f(\mathbf{x})=x_1x_2 \cdots x_n$, having multiplicative complexity, $n-1$ and $1$, respectively. On the other hand, some functions have much lower multiplicative complexity with respect to XOR-AND circuits than to $\Sigma\Pi\Sigma$ circuits. An example of this is the \emph{majority} function, known to have multiplicative complexity close to $n$ \citep{DBLP:journals/tcs/BoyarP08} with respect to XOR-AND circuits, but exponentially many AND gates are needed in any $\Sigma\Pi\Sigma$ circuit (see \cite{razborov1987lower} and \cite{juknabook}). Note that the exponential lower bound on the total number of gates in a $\Sigma\Pi\Sigma$ circuit implies an exponential lower bound on the number of AND gates. To see this, note that the output level contains only one gate, and if there were more than $n$ inputs to any AND gate, they would be linearly dependent. The AND of linearly dependent linear functions can either be expressed with fewer inputs or has an output of zero and can be removed.

\paragraph{Relationship Between Nonlinearity and Multiplicative Complexity}

Some relations between nonlinearity and multiplicative complexity are
known (see also \cite{carlet2001complexity}).

\begin{proposition}[\cite{DBLP:journals/tcs/ZhengZI99}]
\label{zhengetal}
Let $f$ be an $(n,1)$-function. Suppose there exists an affine subspace $U$ of dimension $k$ such that $f$ is affine on the points in $U$. Then $f$ has nonlinearity at most $2^{n-1}-2^{k-1}$.
\end{proposition}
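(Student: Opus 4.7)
The plan is to reduce to the case where $f$ vanishes on all of $U$ and then average the agreement count over a carefully chosen family of affine functions. Since $NL(f)$ is invariant under adding an affine function to $f$, I can replace $f$ by $f - \ell_0$, where $\ell_0$ is any affine extension to $\bbF_2^n$ of the affine function that agrees with $f$ on $U$. After this reduction, $f|_U \equiv 0$, and it suffices to exhibit an affine $\ell$ with $|\{\mathbf{x} : \ell(\mathbf{x}) = f(\mathbf{x})\}| \geq 2^{n-1} + 2^{k-1}$.

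Write $U = \mathbf{u}_0 + V$ for a $k$-dimensional linear subspace $V \subseteq \bbF_2^n$, and consider the family $\mathcal{F}$ of affine functions $\ell_{\mathbf{a}, b}(\mathbf{x}) = \left\langle \mathbf{a}, \mathbf{x} \right\rangle + b$ that vanish identically on $U$. A direct check shows $\ell_{\mathbf{a},b} \in \mathcal{F}$ iff $\mathbf{a} \in V^\perp$ and $b = \left\langle \mathbf{a}, \mathbf{u}_0 \right\rangle$, so $|\mathcal{F}| = 2^{n-k}$. Every $\ell \in \mathcal{F}$ already agrees with $f$ on all $2^k$ points of $U$ for free, so only the $2^n - 2^k$ points outside $U$ require analysis.

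For a point $\mathbf{x} = \mathbf{u}_0 + \mathbf{w}$ outside $U$ (so $\mathbf{w} \notin V$), the condition $\ell_{\mathbf{a},b}(\mathbf{x}) = f(\mathbf{x})$ reduces to $\left\langle \mathbf{a}, \mathbf{w} \right\rangle = f(\mathbf{x})$. Since $(V^\perp)^\perp = V$ and $\mathbf{w} \notin V$, the functional $\mathbf{a} \mapsto \left\langle \mathbf{a}, \mathbf{w} \right\rangle$ is nonzero on $V^\perp$, so exactly $2^{n-k-1}$ of the $2^{n-k}$ members of $\mathcal{F}$ satisfy the condition, regardless of the value $f(\mathbf{x})$. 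Averaging the agreement count over $\mathcal{F}$ then yields
\[
\frac{2^k \cdot 2^{n-k} + (2^n - 2^k)\cdot 2^{n-k-1}}{2^{n-k}} = 2^{n-1} + 2^{k-1},
\]
so some $\ell \in \mathcal{F}$ attains at least this many agreements, giving the claimed bound $NL(f) \leq 2^n - (2^{n-1} + 2^{k-1}) = 2^{n-1} - 2^{k-1}$.

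The main thing to get right is the choice of averaging family: averaging over all $2^{n+1}$ affine functions produces only the trivial mean $2^{n-1}$, so the argument must be restricted to $\mathcal{F}$, where the structural assumption $f|_U \equiv 0$ forces $2^k$ free agreements on $U$ and the duality $(V^\perp)^\perp = V$ yields a perfectly unbiased split elsewhere.
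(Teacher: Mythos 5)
Your proof is correct. Note that the paper itself offers no proof of this proposition---it is stated as a citation to Zheng, Zhang and Imai---so there is nothing internal to compare against; your argument is a valid self-contained derivation. Every step checks out: the reduction to $f|_U\equiv 0$ is legitimate since nonlinearity is invariant under adding affine functions, the identification of $\mathcal{F}$ with $\{\ell_{\mathbf{a},b}: \mathbf{a}\in V^\perp,\ b=\langle\mathbf{a},\mathbf{u}_0\rangle\}$ is right, and the unbiased split at points outside $U$ follows from $(V^\perp)^\perp=V$ for the standard (nondegenerate) form on $\bbF_2^n$. Your counting argument is in fact the combinatorial shadow of the usual Walsh-transform proof of this result: summing $\widehat{F}(\mathbf{a})=\sum_{\mathbf{x}}(-1)^{f(\mathbf{x})+\langle\mathbf{a},\mathbf{x}\rangle}$ with signs $(-1)^{\langle\mathbf{a},\mathbf{u}_0\rangle}$ over $\mathbf{a}\in V^\perp$ gives $2^n$, forcing some coefficient to have magnitude at least $2^k$; your average over $\mathcal{F}$ is exactly that identity rewritten in terms of agreement counts.
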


If a function has multiplicative complexity $n-k$, then there exists an affine subspace of dimension $k$ on which $f$ is affine (see e.g. \cite{DBLP:journals/tcs/BoyarP08}). Applying this the following proposition is immediate. 

\begin{proposition}
\label{prop:nlvsmcpredicates}
Let $f$ be an $(n,1)$-function with multiplicative complexity $M$. Then the $NL(f)\leq 2^{n-1}-2^{n-M-1}$.
\end{proposition}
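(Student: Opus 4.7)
The plan is to chain the two ingredients that have just been set up in the preceding paragraph. First, I would invoke the structural fact (attributed to Boyar--Peralta, and quoted in the sentence immediately before the proposition) that any $(n,1)$-function $f$ computable with $M$ AND gates is affine on some affine subspace $U \subseteq \bbF_2^n$ of dimension at least $n - M$. Second, I would plug $k = n - M$ into Proposition~\ref{zhengetal} to read off
\[
NL(f) \;\leq\; 2^{n-1} - 2^{k-1} \;=\; 2^{n-1} - 2^{n-M-1},
\]
which is exactly the claimed bound. So the proof is essentially a one-line composition, and most of my effort would go into justifying the first ingredient if a self-contained treatment were desired.

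To sketch that justification, I would process the AND gates $g_1, \ldots, g_M$ of the circuit in topological order. For each gate $g_j = a_j \wedge b_j$, the inputs $a_j, b_j$ are affine functions of the original variables after the previous gates have been made linear by the constraints imposed so far. I would adjoin one affine constraint $a_j = 0$ (or $b_j = 0$, whichever is nonconstant), forcing the output of $g_j$ to the constant $0$ and hence making it a trivially affine function of the remaining free variables. After $M$ such steps, every AND gate has been linearized, so the whole output is an affine function on the intersection, and at most $M$ affine constraints were imposed, leaving an affine subspace of dimension at least $n - M$.

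The main obstacle, if one exists, is purely bookkeeping: one needs to argue that each constraint is genuinely new (not already implied by earlier ones) or, equivalently, that one can always choose the constraint so that the dimension drops by exactly $1$. This is handled by the standard observation that if the chosen input $a_j$ is already constant under the accumulated constraints, the gate is already affine and no new constraint is needed. With this, the dimension count $n - M$ is immediate and the proposition follows from a direct application of Proposition~\ref{zhengetal}.
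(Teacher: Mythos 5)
Your proposal matches the paper's argument exactly: the paper likewise observes that multiplicative complexity $M$ yields an affine subspace of dimension $n-M$ on which $f$ is affine (citing Boyar--Peralta) and then applies Proposition~\ref{zhengetal} with $k=n-M$, calling the result immediate. Your additional sketch of the subspace-restriction argument (one affine constraint per AND gate, skipping gates whose chosen input is already constant) is the standard and correct justification of the cited fact.
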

We remark that this result holds for both $\Sigma\Pi\Sigma$ circuits and XOR-AND circuits. It is a slight generalization of a result given in \cite{DBLP:conf/ciac/BoyarFP13} where we showed a similar relation. The proof of the latter result holds however only for XOR-AND circuits. There it is also shown that the above results are tight; for every $M\leq \frac{n}{2}$ there exists an explicit function meeting the bound with equality, for both $\Sigma\Pi\Sigma$ and
XOR-AND circuits.

\paragraph{Linear Codes}
\label{sec:lincodes}
Most bounds in this paper will come from coding theory. 
In this subsection, we briefly review the necessary facts. For more information,
see chapter 17 in \cite{jukna2011extremal}
or the older but comprehensive
\citep{sloane1977theory}.

A \emph{linear (error correcting) code} of \emph{length} $s$ is a linear subspace, $\mathcal{C}$ of
$\bbF_2^s$.
The \emph{dimension} of a code is the dimension of the subspace, $\mathcal{C}$,
 and the elements of $\mathcal{C}$ are called \emph{codewords}.
The (minimum) \emph{distance} $d$ of $\mathcal{C}$ is defined as
\[
d= \min_{\mathbf{x\neq y}\in \mathcal{C}, } |\mathbf{x}+ \mathbf{y}|.
\]
The following fact is well known
\begin{proposition}
\label{prop:linearcodesdistance}
  For every linear code, $\mathcal{C}$, the distance is exactly the minimum
weight among non-zero codewords.
\end{proposition}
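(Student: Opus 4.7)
The plan is to prove equality by showing the two inequalities separately, exploiting the single fact that $\mathcal{C}$ is closed under addition (being a linear subspace of $\bbF_2^s$) and, in particular, that $\mathbf{0} \in \mathcal{C}$.

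For the direction showing $d$ is at most the minimum nonzero weight, I would take an arbitrary nonzero codeword $\mathbf{z} \in \mathcal{C}$ achieving the minimum weight. Since $\mathbf{0} \in \mathcal{C}$ and $\mathbf{z} \neq \mathbf{0}$, the pair $(\mathbf{z}, \mathbf{0})$ is a valid pair of distinct codewords in the minimum over which $d$ is taken, and its contribution is $|\mathbf{z} + \mathbf{0}| = |\mathbf{z}|$. Hence $d \leq |\mathbf{z}|$, giving one inequality.

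For the reverse direction, I would take distinct codewords $\mathbf{x}, \mathbf{y} \in \mathcal{C}$ achieving the minimum in the definition of $d$. Because $\mathcal{C}$ is a linear subspace, $\mathbf{x} + \mathbf{y} \in \mathcal{C}$, and because $\mathbf{x} \neq \mathbf{y}$ (and we are working over $\bbF_2$, where $-\mathbf{y} = \mathbf{y}$), the sum $\mathbf{x} + \mathbf{y}$ is a nonzero codeword. Its Hamming weight equals the Hamming distance $|\mathbf{x} + \mathbf{y}|$ between $\mathbf{x}$ and $\mathbf{y}$, so the minimum nonzero weight is at most $d$. Combining the two inequalities yields the claim.

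There is no serious obstacle here; the entire content of the proposition rests on the closure of a linear code under addition (and the presence of $\mathbf{0}$), and it would only take a few lines to write out carefully. The main thing to be careful about is that in the definition of $d$ the minimum ranges over distinct pairs, so one must explicitly exhibit $(\mathbf{z}, \mathbf{0})$ as such a pair and observe that $\mathbf{x} + \mathbf{y}$ is nonzero precisely because $\mathbf{x} \neq \mathbf{y}$.
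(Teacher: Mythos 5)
Your proof is correct and is the standard argument: one inequality from the pair $(\mathbf{z},\mathbf{0})$ of distinct codewords, the other from the fact that $\mathbf{x}+\mathbf{y}$ is a nonzero codeword of weight $|\mathbf{x}+\mathbf{y}|$. The paper states this proposition as a well-known fact and gives no proof, so there is nothing to compare against; your write-up fills that gap correctly.
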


Let $L(m,d)$ be the length of the shortest linear $m$-dimensional code over $\bbF_2$ with distance $d$.
We will use lower and upper bounds on $L(m,d)$.
One lower bound is the following
\citep{DBLP:journals/tit/McElieceRRW77}, see also \cite{sloane1977theory},
page 563.

\begin{theorem}[McEliece, Rodemich, Rumsey, Welch]
\label{thm:mrrw2}
For $0 < \delta < 1/2$, let $\mathcal{C}\subseteq \{0,1\}^{s}$ be a linear 
 code with dimension $m$ and
distance $\delta s$. Then the rate $R=\frac{m}{s}$ of the code
satisfies
$R\leq \min _{0\leq u\leq 1-2\delta }B(u,\delta)$, 
where
$B(u,\delta)=1+h(u^2)-h(u^2+2\delta u+2\delta)$, 
$h(x)=H_2\left( \frac{1-\sqrt{1-x}}{2}\right)$, and $H_2(x) = -x \log x - (1-x)\log (1-x)$.
\end{theorem}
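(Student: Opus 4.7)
The plan is to prove this via Delsarte's linear programming bound for binary codes, which is the framework MRRW introduced for this family of bounds. The crucial ingredient is a clever choice of test polynomial built from Krawtchouk polynomials, together with a delicate asymptotic analysis.

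First, I would invoke the LP duality setup. For the distance distribution $(A_0,\ldots,A_s)$ of any code of length $s$ with minimum distance at least $d$, Delsarte's inequalities give $A_0=1$, $A_i=0$ for $1 \leq i < d$, $A_i\geq 0$, and $\sum_{i=0}^{s} A_i K_k(i)\geq 0$ for every $k\geq 1$, where $K_k$ is the binary Krawtchouk polynomial of degree $k$ (relative to length $s$). Since $|\mathcal{C}|=\sum_i A_i$, any polynomial $\beta(x)=\sum_k \widehat{\beta}_k K_k(x)$ with $\widehat{\beta}_0>0$, $\widehat{\beta}_k\geq 0$, and $\beta(i)\leq 0$ for $d\leq i\leq s$ yields $|\mathcal{C}|\leq \beta(0)/\widehat{\beta}_0$, and after taking $\tfrac{1}{s}\log$ this becomes a bound on the rate $R$.

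Second, I would plug in the MRRW test polynomial. The canonical choice is
\[
\beta(x)=\frac{1}{x_0-x}\bigl[K_t(x_0)K_{t+1}(x)-K_{t+1}(x_0)K_t(x)\bigr]^{2},
\]
where $t$ and $x_0$ are free parameters and $x_0$ is taken just below $d$. The Christoffel--Darboux identity for Krawtchouk polynomials guarantees that the Krawtchouk expansion of this $\beta$ has nonnegative coefficients, while the factor $(x_0-x)^{-1}$ is negative on $(x_0,s]$ and the squared bracket is nonnegative, so $\beta(i)\leq 0$ for all $i\geq d$ as required.

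Third, I would carry out the asymptotic analysis. With $\tau=t/s$, $\delta=d/s$, and $u$ parameterising the distance of $x_0/s$ from $1/2$, standard Plancherel--Rotach-type asymptotics for Krawtchouk polynomials give $\tfrac{1}{s}\log\binom{s}{t}\to H_2(\tau)$ together with the root asymptotic $x_0/s \to (1-\sqrt{1-4\tau(1-\tau)})/2$; inverting the latter through the substitution $h(x)=H_2((1-\sqrt{1-x})/2)$ is exactly what produces the function $h$ in the statement. Substituting the estimates for $\beta(0)$ and $\widehat{\beta}_0$ into the LP bound and optimising over the free parameter yields $R\leq 1+h(u^{2})-h(u^{2}+2\delta u+2\delta)$, and taking the minimum over admissible $u\in[0,1-2\delta]$ gives the claim. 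The main obstacle is the Krawtchouk asymptotics: obtaining sharp enough estimates on both $\beta(0)$ and $\widehat{\beta}_0$ simultaneously is the technically deepest part of the original argument, and in the present paper this theorem is used strictly as a black box from \cite{DBLP:journals/tit/McElieceRRW77}, so I would cite rather than reprove it.
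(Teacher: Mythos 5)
The paper does not prove this theorem at all: it is imported verbatim from \cite{DBLP:journals/tit/McElieceRRW77} (see also \cite{sloane1977theory}) and used strictly as a black box, so your decision to cite rather than reprove is exactly what the paper does, and on that level the proposal is consistent with the source.

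However, the sketch you give is of the \emph{wrong} MRRW bound. The Delsarte LP framework over the Hamming scheme with the Christoffel--Darboux test polynomial $\beta(x)=\frac{1}{x_0-x}\bigl[K_t(x_0)K_{t+1}(x)-K_{t+1}(x_0)K_t(x)\bigr]^{2}$ is the derivation of the \emph{first} linear programming bound, $R\leq h\bigl((1-2\delta)^2\bigr)$ in the notation of the statement. That is precisely the value $B(1-2\delta,\delta)$, i.e.\ only the endpoint $u=1-2\delta$ of the minimum: note $h\bigl((1-2\delta)^2+2\delta(1-2\delta)+2\delta\bigr)=h(1)=H_2(1/2)=1$, so $B(1-2\delta,\delta)=h\bigl((1-2\delta)^2\bigr)$. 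The statement at hand is the \emph{second} MRRW bound, whose classical proof does not come from a single-parameter optimization of the Krawtchouk construction: it applies the LP bound to constant-weight codes in the Johnson association scheme (with Hahn/Eberlein polynomials in place of Krawtchouk polynomials) and then transfers the result to unrestricted codes via a Bassalygo--Elias-type counting argument; the extra free parameter $u$ ranging over $[0,1-2\delta]$ arises from the choice of the weight of the sphere in that transfer. So the step ``optimising over the free parameter yields $R\leq 1+h(u^2)-h(u^2+2\delta u+2\delta)$'' does not follow from the construction you describe, and for small $\delta$ the interior minimum is strictly below the endpoint value, so the gap is not cosmetic. Since the theorem is cited as a black box, this does not affect the paper, but as a proof sketch it conflates two distinct arguments.
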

An upper bound is the following, see
\cite{jukna2011extremal}.

\begin{theorem}[Gilbert-Varshamov]
\label{thm:gilbertvar}
  A linear code $\mathcal{C}\subseteq \{0,1 \}^s$ of dimension $m$
and distance $d$ exists provided that
$
\sum_{i=0}^{d-2}{s-1\choose i} < 2^{s-m}.
$
\end{theorem}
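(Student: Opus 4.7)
The plan is to construct the desired code via its parity-check matrix, building the matrix one column at a time using a greedy argument. I will produce an $(s-m)\times s$ matrix $H$ over $\bbF_2$ whose columns $\mathbf{h}_1,\ldots,\mathbf{h}_s$ have the property that every $d-1$ of them are linearly independent, and then take $\mathcal{C}=\ker(H)$.

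First I would recall the standard dictionary between parity-check matrices and minimum distance. By Proposition \ref{prop:linearcodesdistance}, the distance of a linear code equals the minimum Hamming weight of a nonzero codeword. A codeword $\mathbf{c}\in \ker(H)$ of weight $w$ corresponds to a set of $w$ columns of $H$ that sum to $\mathbf{0}$, i.e.\ a linearly dependent set of $w$ columns. Hence, $\mathcal{C}$ has distance at least $d$ if and only if every $d-1$ columns of $H$ are linearly independent over $\bbF_2$.

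Next comes the greedy construction. Suppose we have already chosen columns $\mathbf{h}_1,\ldots,\mathbf{h}_{j-1}\in\bbF_2^{s-m}$ such that every $d-1$ of them are linearly independent. The only way that adjoining a new vector $\mathbf{h}_j$ can create a linearly dependent set of $d-1$ columns is if $\mathbf{h}_j$ itself is expressible as a sum of at most $d-2$ of the previously chosen columns. The number of such \emph{forbidden} vectors is at most
\[
\sum_{i=0}^{d-2}\binom{j-1}{i}.
\]
As long as this number is strictly less than $2^{s-m}$, some vector in $\bbF_2^{s-m}$ is admissible and may be used as $\mathbf{h}_j$. Because the sum is monotone in $j$, the tightest case is $j=s$, which is exactly the hypothesis of the theorem. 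Thus, starting from $\mathbf{h}_1$ arbitrary nonzero, we can successfully extend all the way to $\mathbf{h}_s$.

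Finally I would clean up the dimension: $\dim \mathcal{C}=s-\mathrm{rank}(H)\ge s-(s-m)=m$, and if strict inequality holds, passing to any $m$-dimensional linear subspace of $\mathcal{C}$ produces a code of dimension exactly $m$ whose distance is at least that of $\mathcal{C}$, i.e.\ at least $d$. The only mildly subtle step is the bookkeeping in the counting argument—being careful to forbid sums of \emph{at most} $d-2$ previous columns (so that adjoining $\mathbf{h}_j$ does not create a dependent set of size $\le d-1$), rather than exactly $d-2$; beyond that the proof is routine.
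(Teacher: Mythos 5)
Your proof is correct, and it is the standard greedy parity-check-matrix argument for the Gilbert--Varshamov bound; the paper itself states this theorem as a known result (citing \cite{jukna2011extremal}) without giving a proof, and your argument matches the textbook proof found there. The two points that need care --- forbidding sums of \emph{at most} $d-2$ previous columns (including the empty sum, so each new column is nonzero) and fixing the dimension by passing to an $m$-dimensional subspace if $\mathrm{rank}(H)<s-m$ --- are both handled correctly.
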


\subsection{Results}

\paragraph{Our Contributions}
Let $f$ be an $(n,m)$-function with nonlinearity $2^{n-1}-2^{n-M-1}$.
We show that any $\Sigma\Pi\Sigma$ circuit with $s$ AND gates computing $f$ defines an $m$-dimensional linear code in $\bbF_2^s$ with distance $M$, so lower bounds on the size of such codes show lower bounds on the number of AND gates in such a circuit.
This means that a proof of high nonlinearity for a function
is automatically a lower bound on the multiplicative complexity for this class of circuits. We instantiate this result for the specific case of quadratic circuits. First, the so called \emph{Gold} functions with $n$ bits of input and $n$ bits of output are quadratic and almost bent.
Using known coding theory bounds, we 
conclude that any quadratic circuit computing such functions must have at least $2.32n$ AND gates. To the best of our knowledge this is the best lower bound for such circuits.
Similarly a well known lower bound for finite field multiplication 
\citep{DBLP:journals/tc/BrownD80,DBLP:journals/tcs/LempelSW83}
follows as a direct corollary.

On the other hand, we show that appealing only to the nonlinearity of a
function cannot lead to much stronger lower bounds on the multiplicative complexity by showing the existence
of \emph{quadratic} (in fact, \emph{bilinear}) \emph{circuits} 
from $n$ bits to $n$ bits with nonlinearity
at least $2^{n-1}-2^{\frac{n}{2}+3\sqrt{n}}$ containing $s$ AND gates
where $s$ is the length of a shortest $n$-dimensional code of distance $\frac{n}{2}$.
The Gilbert-Varshamov bound gives that $s\leq 2.95n$.

Although almost all Boolean functions with $n$ inputs and one output
have multiplicative complexity at least $2^{n/2}-O(n)$ \citep{DBLP:journals/tcs/BoyarPP00}, no concrete function
of this type has been shown to have multiplicative complexity more than
$n-1$.
We give a concrete function with $n$ inputs and $n$ outputs with multiplicative 
complexity at least $2n-3$. To the best of our knowledge this is the best lower bound for the multiplicative complexity for an explicit $(n,n)$-function.

Finally we study the worst case multiplicative complexity of vector valued functions. We show that almost every $(n,m)$-function has multiplicative complexity at least $(1-o(1))\sqrt{m2^n}$ and that \emph{every} such function can be computed using an XOR-AND circuit with at most $(2.5+o(1))\sqrt{m2^n}$ AND gates.

\paragraph{Related Results}
\label{sec:relatedresults}
Previous results showing relations between error correcting codes and 
bilinear and quadratic circuits include
 the work of \citep{DBLP:journals/tc/BrownD80,DBLP:journals/tcs/LempelSW83}
where
it is shown that a bilinear
or quadratic circuit computing finite field multiplication 
of two $\bbF_{q^n}$ elements 
induces an error correcting
code over $\bbF_q$ of dimension $n$ and distance $n$.
For $q=2$, Theorem~\ref{thm:mrrw2} implies that such a circuit must have at least $3.52n$ multiplications (AND
gates).
If $n$ is the number of input bits, this corresponds to a lower bound of $1.76n$.
 Note that this result can be obtained as a corollary of our results,
which relate the nonlinearity of a quadratic function with $m$ outputs and
multiplicative complexity $s$ to the distance of an $m$-dimensional linear 
code over vectors of length $s$.

For $q>2$, the gates (or lines in a straight-line program) have field elements
as inputs, and the total number of multiplications and divisions is counted.
A lower bound of $3n -o(n)$ for bilinear circuits was shown in~\cite{DBLP:journals/jacm/KaminskiB89},
and it was extended to general circuits in \citep{DBLP:journals/cc/BshoutyK06}.
This proof is not based on coding theoretic techniques, but rather the
study of Hankel matrices related to the bilinear transformation.

A different relation between general Boolean (not just quadratic) $(n,1)$-functions and error correcting codes comes from the following observation:
Suppose some $(n,m)$-function $f$ has a certain nonlinearity $D$.
If we identify the functions $f_1,\ldots, f_m$, $x_1,\ldots,x_n$ and the constant $1$ with their
truth tables as vectors in $\bbF_2^{2^n}$, then
$\mathcal{C}=\spa\{f_1,\ldots, f_m,x_1,\ldots,x_n,1\}$ is a code in $\bbF_2^{2^n}$
with dimension $n+m+1$ and distance $D$, and limitations and possibilities
for codes
transfer to results on nonlinearity (see
the survey 
\citep{carletvectorial} and the references therein).
However this says nothing about the multiplicative complexity of the function $f$.

The structure of quadratic circuits has itself been studied 
by Mirwald and Schnorr \citep{DBLP:journals/tcs/MirwaldS92}. Among
other things they show that for quadratic $(n,1)$- and $(n,2)$-functions, 
quadratic circuits are optimal. It is still not known
whether this is true for $(n,m)$-functions in general.

\section{$\Sigma\Pi\Sigma$ Circuits: Multiplicative Complexity and Nonlinearity}
This section is devoted to showing a relation between
the nonlinearity and the multiplicative complexity of $\Sigma\Pi\Sigma$ circuits.
We first show a connection between nonlinearity, multiplicative complexity
and certain linear codes.
Applying this connection, Theorem~\ref{thm:mrrw2} gives a relationship between
nonlinearity and multiplicative complexity for any
quadratic $(n,m)$-function.

\begin{theorem}
\label{thm:nlecc}
Let the $(n,m)$-function, $f$, have
$NL(f)\geq 2^{n-1}-2^{n-M-1}$, where $M\leq \frac{n}{2}$.
Then a $\Sigma\Pi\Sigma$ circuit with $s$ AND gates computing $f$ exhibits an $m$-dimensional linear
code over $\bbF_2^s$ with distance $M$.   
\end{theorem}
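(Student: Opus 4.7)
The plan is to associate to each output bit a characteristic vector recording which AND gates feed into it, and then read off the code as the span of these vectors; the distance bound is forced by \ref{prop:nlvsmcpredicates} applied to linear combinations of outputs.

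Concretely, I would fix a $\Sigma\Pi\Sigma$ circuit for $f$ with AND gates $g_1,\ldots,g_s$, where $g_j = A_j \wedge B_j$ for affine $A_j,B_j$. Because the top layer is an XOR layer, each output admits the canonical decomposition
\[
f_i(\mathbf{x}) \;=\; L_i(\mathbf{x}) \;+\; \sum_{j=1}^{s} c_{i,j}\, g_j(\mathbf{x}),
\]
with $L_i$ affine and $c_{i,j}\in\bbF_2$. Set $\mathbf{w}_i := (c_{i,1},\ldots,c_{i,s})\in\bbF_2^s$ and let $\mathcal{C} := \spa\{\mathbf{w}_1,\ldots,\mathbf{w}_m\}\subseteq\bbF_2^s$. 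The key observation is that for any nonempty $T\subseteq[m]$, summing the decompositions gives
\[
f_T(\mathbf{x}) \;=\; L_T(\mathbf{x}) \;+\; \sum_{j=1}^{s} c_{T,j}\, g_j(\mathbf{x}), \qquad c_{T,j} \,=\, \sum_{i\in T} c_{i,j},
\]
so the Hamming weight of the codeword $\mathbf{w}_T := \sum_{i\in T}\mathbf{w}_i$ is exactly the number of AND gates that actually participate in computing $f_T$.

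Now both the dimension claim and the distance claim fall out of a single argument. Suppose some nonempty $T$ yields $|\mathbf{w}_T| = w < M$. Then $f_T$ is realised by a $\Sigma\Pi\Sigma$ circuit using only the $w$ AND gates with $c_{T,j}=1$ (together with the affine $L_T$), so its multiplicative complexity in the $\Sigma\Pi\Sigma$ model is at most $w$. By \ref{prop:nlvsmcpredicates} (which, as the excerpt explicitly notes, holds for $\Sigma\Pi\Sigma$ circuits), this gives
\[
NL(f_T) \;\leq\; 2^{n-1} - 2^{n-w-1} \;<\; 2^{n-1} - 2^{n-M-1}.
\]
But by the definition of vector-valued nonlinearity, $NL(f_T)\geq NL(f)\geq 2^{n-1}-2^{n-M-1}$, a contradiction. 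Hence every nonempty $T$ produces a nonzero codeword of weight $\geq M$: taking $|T|=1$ and contradicting $\mathbf{w}_T=0$ gives linear independence of $\mathbf{w}_1,\ldots,\mathbf{w}_m$, so $\dim\mathcal{C}=m$; and by \ref{prop:linearcodesdistance}, the minimum nonzero weight being $\geq M$ is the distance of $\mathcal{C}$.

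The only real subtlety is bookkeeping: one must verify that restricting to the $w$ gates with $c_{T,j}=1$ genuinely produces a $\Sigma\Pi\Sigma$ circuit for $f_T$ (it does, since the top XOR layer is simply re-wired), and one must invoke \ref{prop:nlvsmcpredicates} in the $\Sigma\Pi\Sigma$ sense rather than the XOR-AND sense. Everything else is essentially a translation between circuit data and code data.
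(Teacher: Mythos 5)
Your proposal is correct and follows essentially the same route as the paper: associate to each output the characteristic vector in $\bbF_2^s$ of the AND gates feeding it through the top XOR layer, observe that the weight of $\mathbf{v}_T$ upper-bounds the ($\Sigma\Pi\Sigma$) multiplicative complexity of $f_T$, and invoke \ref{prop:nlvsmcpredicates} together with the definition of vector-valued nonlinearity to force every nonzero codeword to have weight at least $M$. The paper phrases the final step as a direct lower bound on the multiplicative complexity of each $f_T$ rather than as a contradiction, but the content is identical.
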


\begin{proof}
  Let $C$ be a $\Sigma\Pi\Sigma$ circuit with $s$ AND gates computing $f$,
and let $A_1,\ldots, A_s$, be the AND gates.
Since $C$ is $\Sigma\Pi\Sigma$, for each $i\in [m]$ there
exist $S_i\subseteq [s]$ and $X_i\subseteq [n]$ such that $f_i$ can be 
written as
\[
f_i = \sum_{j\in S_i}A_j+\sum_{j\in X_i}{x}_{j}.
\]
Without loss of generality, we can assume that $X_i=\emptyset$ for all $i$, since
both nonlinearity and multiplicative complexity are invariant
under the addition of affine terms.
For each $i\in [m]$, we define the vector $\mathbf{v}_i\in \bbF_2^s$,
where $\mathbf{v}_{i,j}=1$ if and only if there is a directed path
from $A_j$ to the $i$th output.
By the nonlinearity of $f$, we have that for each $i\in [m]$, 
\[
NL(f_i)\geq 2^{n-1}-2^{n-M-1}.
\]
Applying Proposition~\ref{prop:nlvsmcpredicates},
the multiplicative complexity of $f_i$ is at least 
$M$, hence $|\mathbf{v}_i|\geq M$.
Similarly, for any nonempty $T\subseteq [m]$ we can associate a vector
$\mathbf{v}_T$ by setting
\[
\mathbf{v}_T=\sum_{i\in T}\mathbf{v}_i.
\]
Since the circuit is $\Sigma\Pi\Sigma$, it holds that if $|\mathbf{v}_T|\leq p$,
the multiplicative complexity of
$f_T=\sum_{i\in T}f_i$ is at most $p$.
Applying the definition of nonlinearity to $f_T$,
$NL\left(f_T\right)\geq 2^{n-1}-2^{n-M-1 }$.
Proposition~\ref{prop:nlvsmcpredicates} implies that the multiplicative complexity
of $f_T$ is at least $M$, so we have that 
$|\mathbf{v}_T|\geq M$ when $T\neq \emptyset$.

In conclusion, every nonzero vector in 
the $m$ dimensional vector space $\mathcal{C}=\spa_{\bbF_2}\{\mathbf{v}_1,\ldots,\mathbf{v}_m \}$
has Hamming weight at least $M$. By Proposition~\ref{prop:linearcodesdistance}, $\mathcal{C}$ is
a linear code with dimension $m$ and distance at least $M$.
\end{proof}

Applying this theorem to quadratic almost bent functions, we have that a quadratic circuit
computing such a function has at least $L(n,\frac{n-1}{2})$ AND gates. Combining this with Theorem~\ref{thm:mrrw2}, calculations, which we include for the
sake of completeness, show:
\begin{corollary}
\label{cor:almostbentcorollary}
Any quadratic circuit computing an almost bent $(n,n)$-function
has at least $L(n,\frac{n-1}{2})$ AND gates. For sufficiently large
$n$, $L(n,\frac{n-1}{2})>2.32n$.
\end{corollary}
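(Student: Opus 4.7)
The plan is to combine \ref{thm:nlecc} with the MRRW bound of \ref{thm:mrrw2}. An almost bent $(n,n)$-function has nonlinearity $2^{n-1}-2^{(n-1)/2}$; matching this to the form $2^{n-1}-2^{n-M-1}$ forces $M=(n-1)/2$, which satisfies the hypothesis $M\le n/2$. Since every quadratic circuit is a $\Sigma\Pi\Sigma$ circuit, \ref{thm:nlecc} applies and extracts, from any quadratic circuit with $s$ AND gates computing such a function, an $n$-dimensional linear code in $\bbF_2^s$ of distance at least $(n-1)/2$. By the very definition of $L$, this gives $s\ge L(n,(n-1)/2)$, establishing the first assertion.

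For the asymptotic bound I would argue by contradiction. Suppose that for infinitely many $n$ there is an $n$-dimensional linear code $\mathcal{C}\subseteq \bbF_2^s$ of distance at least $(n-1)/2$ with $s\le 2.32\,n$. Set $\alpha=s/n\le 2.32$, so that the rate is $R=n/s\ge 1/2.32$ and the relative distance is $\delta=d/s\ge (n-1)/(2s)$, which converges to $\delta^*:=1/(2\cdot 2.32)\approx 0.2155$ as $n\to\infty$. It then suffices to exhibit a single $u^{*}\in(0,1-2\delta^{*})$ with $B(u^{*},\delta^{*})<1/2.32$: because $B$ is continuous in $\delta$, the same $u^{*}$ witnesses $B(u^{*},\delta)<1/\alpha$ for all sufficiently large $n$, contradicting \ref{thm:mrrw2}, which forces $R\le B(u^{*},\delta)$.

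The main obstacle is purely the numerical verification that some admissible $u^{*}$ satisfies $B(u^{*},\delta^{*})<1/2.32$. I would locate a good $u^{*}$ by differentiating $B(u,\delta^{*})=1+h(u^{2})-h(u^{2}+2\delta^{*}u+2\delta^{*})$ in $u$, setting the derivative to zero, and solving (using that $h(x)=H_{2}((1-\sqrt{1-x})/2)$ is elementary), then substituting back and evaluating the resulting combination of binary entropies to several decimal places. This is a tedious but conceptually routine calculation; the slack produced by a strict inequality at $\delta^{*}$ absorbs the lower-order discrepancy between $(n-1)/2$ and $n/(2\alpha)$ via the continuity argument above, so the bound $L(n,(n-1)/2)>2.32\,n$ follows for all sufficiently large $n$.
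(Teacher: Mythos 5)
Your proposal follows essentially the same route as the paper: apply \ref{thm:nlecc} with $M=\frac{n-1}{2}$ to extract an $n$-dimensional code of distance $\frac{n-1}{2}$ (giving $s\geq L(n,\frac{n-1}{2})$ by definition), then derive a contradiction from \ref{thm:mrrw2} with $\delta^*\approx 0.2155$ and $R\geq 1/2.32>0.431$. The only step you leave unexecuted is exhibiting the numerical witness; the paper simply takes $u=0.32$, for which $B(0.32,0.2155)<0.42<0.431$, so your plan goes through, and your continuity remark handling the discrepancy between $(n-1)/2$ and $\delta^* s$ is a small tidying-up that the paper leaves implicit.
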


\begin{proof}
Recall that almost bent functions have nonlinearity $2^{n-1}-2^{\frac{n-1}{2}}$,
 so
in terms of Theorem~\ref{thm:nlecc}, we have $M=\frac{n-1}{2}$.
Suppose for the sake of contradiction
that $L(n,\frac{n-1}{2})\leq 2.32n$ for infinitely many values of $n$.
Then we have an $n$-dimensional code on $\bbF_2^{2.32n}$ with distance
$\frac{n-1}{2}$. From Theorem~\ref{thm:mrrw2}, we have $\delta =
\frac{n-1}{2\cdot 2.32n}$, so for sufficiently large $n$,
\[
\delta > 0.2155. 
\]
The rate of the
code is  $R=\frac{1}{2.32}> 0.431$. 
Choosing $u=0.32$ shows that the rate $R$ satisfies
\[
0.431<R\leq B(0.32,0.2155)< 0.42,
\]
contradicting the assumption that $L(n,\frac{n-1}{2})\leq 2.32n$.
\end{proof}

The corollary above applies to e.g. the almost bent
Gold functions $G$ defined in Eqn.~\ref{eq:almostbentexample}. 
For bent $(2n,n)$-functions, using Theorem~\ref{thm:nlecc} with
$M=n$ and applying Theorem~\ref{thm:mrrw2}, calculations, which are
not new, but again included for the sake of completeness, show: 

\begin{corollary}
\label{cor:bent2nn}
A quadratic circuit computing any bent $(2n,n)$-function has at 
least $L(n,n)$ AND gates. For sufficiently large $n$, $L(n,n)>3.52n$.
\end{corollary}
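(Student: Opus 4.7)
The plan is to mirror the proof of Corollary~\ref{cor:almostbentcorollary}, with parameters updated to reflect that the function under consideration is a bent $(2n,n)$-function rather than an almost bent $(n,n)$-function.

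First I would unpack the nonlinearity. By definition, a bent $(2n,n)$-function attains the maximum nonlinearity for functions on $2n$ input bits, namely $2^{2n-1}-2^{n-1}$. To invoke Theorem~\ref{thm:nlecc} in its form, set the parameter $M$ so that $2^{2n-1}-2^{(2n)-M-1}=2^{2n-1}-2^{n-1}$, which forces $M=n$. This satisfies the hypothesis $M\leq (2n)/2=n$ with equality. Theorem~\ref{thm:nlecc} then produces, from any $\Sigma\Pi\Sigma$ circuit with $s$ AND gates computing the function, an $n$-dimensional linear code in $\bbF_2^s$ of distance at least $n$. Hence $s\geq L(n,n)$, and since every quadratic circuit is a $\Sigma\Pi\Sigma$ circuit, this establishes the first claim.

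For the numerical bound $L(n,n)>3.52n$ I would argue by contradiction in direct analogy with the previous corollary. Suppose $L(n,n)\leq 3.52n$ for infinitely many $n$. Then there is an $n$-dimensional linear code in $\bbF_2^{3.52n}$ with distance $n$, which in the notation of Theorem~\ref{thm:mrrw2} has rate $R=1/3.52$ and relative distance $\delta=1/3.52\approx 0.284$. The interval $[0,1-2\delta]\approx [0,0.432]$ is nonempty, so the MRRW constraint $R\leq \min_{0\leq u\leq 1-2\delta} B(u,\delta)$ applies. The aim is to exhibit a single value of $u$ in this interval at which $B(u,\delta)<1/3.52$, contradicting MRRW and hence the assumed existence of the code.

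The only real obstacle is the numerical evaluation of $B(u,\delta)=1+h(u^2)-h(u^2+2\delta u+2\delta)$ at the chosen $u$: one must pick $u$ carefully (the analog of $u=0.32$ in the previous corollary) and evaluate $h(u^2)$ and $h(u^2+2\delta u+2\delta)$ with enough precision that the strict inequality $B(u,\delta)<1/3.52$ is unambiguous. This is a routine but slightly delicate computation; everything else is a direct application of Theorems~\ref{thm:nlecc} and~\ref{thm:mrrw2}.
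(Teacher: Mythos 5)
Your proposal follows the paper's proof essentially verbatim: set $M=n$ via \ref{thm:nlecc}, then derive a contradiction from \ref{thm:mrrw2} assuming $L(n,n)\leq 3.52n$. The only step you leave open is the explicit choice of $u$; the paper takes $u=0.4$, for which $B(0.4, 0.28409)\approx 0.28260 < 0.284 < 1/3.52$, completing the argument.
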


\begin{proof}
As in the proof above, in terms of Theorem~\ref{thm:nlecc}, we have
$m=n$ and $M=n$. 
Again suppose for the sake of contradiction that
$L(n,n)\leq 3.52$ for infinitely many values of $n$.
In Theorem~\ref{thm:mrrw2}, for sufficiently large $n$ we have that
\[
R=\delta=\frac{1}{3.52}  \approx 0.28409.
\]
Choosing $u=0.4$ in 
Theorem~\ref{thm:mrrw2} shows that the rate $R$ satisfies
\[
0.28409<R\leq B(0.4, 0.28409 ) \approx 0.28260 < 0.284.
\]
contradicting the assumption that $L(n,n)\leq 3.52$.
\end{proof}

This applies to e.g. the finite field multiplication function as defined
in Eqn.~\ref{eq:finitefieldmult},
reproving the known result on multiplicative complexity for quadratic
circuits for field multiplication mentioned in Section~\ref{sec:relatedresults}.

For both Corollaries~\ref{cor:almostbentcorollary} and \ref{cor:bent2nn},
any improved lower bounds on codes lengths would give an improved lower bound
on the multiplicative complexity.
For Corollary~\ref{cor:almostbentcorollary} 
this technique cannot prove substantially better lower bounds than
$L(n,\frac{n-1}{2})$. Theorem~\ref{thm:gilbertvar} implies that
$L(n,\frac{n-1}{2})\leq 2.95n$.
Below we show that this is not merely a limitation of the proof strategy;
there exist quadratic circuits with $L(n,\frac{n-1}{2})$ AND gates with
nonlinearity relatively close to the optimal. To the best of our knowledge this is the first
example of highly nonlinear $(n,n)$-functions with linear multiplicative complexity,
and therefore it might be a useful building block for cryptographic purposes.

Before proving the next theorem, we need a technical lemma on the
probability that a random matrix has small rank. A simple
proof of this can be found in e.g. \cite{DBLP:conf/focs/KomargodskiRT13}. 

\begin{lemma}[Komargodski, Raz, Tal]
\label{avishayrankprob}
  A uniform random $k\times k$ matrix over $\bbF_2$ has rank at most $d$ with 
probability at most 
$2^{k-(k-d)^2}$.
\end{lemma}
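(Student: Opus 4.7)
The plan is to prove the bound by a union bound over size-$d$ row subsets. The key structural observation is that a matrix $M \in \bbF_2^{k\times k}$ has rank at most $d$ if and only if there exists a subset $S \subseteq [k]$ with $|S| = d$ such that every row of $M$ lies in the span of the rows indexed by $S$. The forward direction is the nontrivial one: if the rank is some $r \leq d$, pick any $r$ linearly independent rows (which span the row space) and extend them to a set of size $d$ by throwing in arbitrary additional rows; all rows of $M$ lie in the span of this enlarged set.

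Armed with this observation, I would split the probability via a union bound:
\[
\Pr[\mathrm{rank}(M)\leq d] \;\leq\; \sum_{\substack{S\subseteq[k]\\ |S|=d}} \Pr\bigl[\text{every row of $M$ lies in }\spa\{M_i : i\in S\}\bigr].
\]
For a fixed subset $S$, condition on the $d$ rows indexed by $S$ (which are arbitrary but fixed once conditioned). Their span is a subspace of $\bbF_2^k$ of dimension at most $d$, hence of size at most $2^d$. Each of the remaining $k - d$ rows is an independent uniform vector in $\bbF_2^k$ and so lies in this fixed subspace with probability at most $2^d/2^k = 2^{d-k}$. By independence of the rows outside $S$ given those in $S$, the joint probability that all $k - d$ outside rows lie in this span is at most $(2^{d-k})^{k-d} = 2^{-(k-d)^2}$.

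Finally I would combine this with the trivial bound $\binom{k}{d} \leq 2^k$ on the number of size-$d$ subsets to conclude
\[
\Pr[\mathrm{rank}(M)\leq d] \;\leq\; \binom{k}{d}\cdot 2^{-(k-d)^2} \;\leq\; 2^{k-(k-d)^2},
\]
which is exactly the claimed bound. There is no real obstacle here; the only place where one must be slightly careful is the structural observation in the first paragraph, since naively one might want ``$d$ linearly independent rows,'' but that version fails to account for ranks strictly less than $d$, and the fix is simply to allow the chosen $d$ rows to be linearly dependent.
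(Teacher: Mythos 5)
Your proof is correct. The paper does not prove this lemma itself---it only cites Komargodski, Raz, and Tal for a proof---and your argument (union bound over size-$d$ row subsets, with the span of the chosen rows capturing all remaining rows with probability at most $2^{-(k-d)(k-d)}$, times $\binom{k}{d}\leq 2^k$) is the standard one and essentially the argument given in that cited reference.
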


\begin{theorem}
  There exist $(n,n)$-functions
with multiplicative complexity at most $L(n,\frac{n-1}{2})$
and nonlinearity at least $2^{n-1}-2^{\frac{n}{2}+{3}\sqrt{n}-1}$.
\end{theorem}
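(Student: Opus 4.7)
The plan is a probabilistic construction whose combinatorial skeleton comes from a near-optimal linear code and whose individual AND gates are random bilinear products. Set $s = L(n, \lceil (n-1)/2 \rceil)$ and fix a basis $\mathbf{v}_1, \ldots, \mathbf{v}_n$ of an $n$-dimensional code $\mathcal{C} \subseteq \bbF_2^s$ of distance at least $(n-1)/2$. Partition the $n$ inputs as $\mathbf{x} = (\mathbf{x}_A, \mathbf{x}_B)$ with $|A| = p := \lfloor n/2 \rfloor$ and $|B| = q := n - p$. For each $j \in [s]$ I would draw $\mathbf{a}_j \in \bbF_2^p$ and $\mathbf{b}_j \in \bbF_2^q$ independently and uniformly at random, introduce the AND gate $A_j = \langle \mathbf{a}_j, \mathbf{x}_A \rangle \cdot \langle \mathbf{b}_j, \mathbf{x}_B \rangle$, and define the outputs $f_i = \sum_{j : v_{i,j} = 1} A_j$. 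This is a bilinear circuit with exactly $s = L(n, \lceil (n-1)/2 \rceil)$ AND gates.

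For each nonempty $T \subseteq [n]$, the codeword $\mathbf{v}_T := \sum_{i \in T} \mathbf{v}_i$ is nonzero of weight at least $(n-1)/2$, and $f_T$ coincides with the bilinear form $\mathbf{x}_A^{\top} M_T \mathbf{x}_B$, where $M_T = \sum_{j \in \operatorname{supp}(\mathbf{v}_T)} \mathbf{a}_j \mathbf{b}_j^{\top}$. A standard reduction---put $M_T$ into $I_r \oplus 0$ by invertible row/column operations, which corresponds to an invertible linear change of coordinates on $\mathbf{x}_A$ and on $\mathbf{x}_B$---shows that a bilinear form on $n$ variables whose matrix has rank $r$ has nonlinearity exactly $2^{n-1} - 2^{n-r-1}$ (it is $IP_{2r}$ in $2r$ of the $n$ variables, with the other $n-2r$ absent). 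The theorem therefore reduces to showing that, with positive probability over the random $(\mathbf{a}_j, \mathbf{b}_j)$, $\operatorname{rank}(M_T) \geq n/2 - 3\sqrt{n}$ simultaneously for every one of the $2^n - 1$ nonempty $T$.

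The main obstacle, and the only nontrivial step, is this rank bound, because $M_T$ is \emph{not} uniformly distributed on $\bbF_2^{p \times q}$ and so Lemma~\ref{avishayrankprob} does not apply to $M_T$ directly. The route I would take is Fourier analysis: the character sum factors gate by gate, giving
\[
\mathbb{E}\bigl[(-1)^{\operatorname{tr}(C^{\top} M_T)}\bigr] = \prod_{j \in \operatorname{supp}(\mathbf{v}_T)} \mathbb{E}_{\mathbf{a}, \mathbf{b}}\bigl[(-1)^{\mathbf{a}^{\top} C \mathbf{b}}\bigr] = 2^{-\operatorname{rank}(C) \cdot |\operatorname{supp}(\mathbf{v}_T)|}
\]
for any $C \in \bbF_2^{p \times q}$. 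Since $|\operatorname{supp}(\mathbf{v}_T)| \geq p$, combining Parseval with the standard estimate $|\{C : \operatorname{rank}(C) = r\}| \leq 2^{r(p+q-r)}$ shows that the collision probability $\sum_M \Pr[M_T = M]^2$ is within a constant factor of $2^{-pq}$, i.e.\ $M_T$ is close to uniform in $L^2$. A Cauchy--Schwarz estimate against the number of $p \times q$ matrices of rank at most $d$ then yields $\Pr[\operatorname{rank}(M_T) \leq d] \leq 2^{-(p-d)(q-d)/2 + O(\log n)}$; at $d = n/2 - 3\sqrt{n}$ this is $2^{-9n/2 + O(\log n)}$, which comfortably survives a union bound over the $2^n - 1$ subsets $T$ and leaves positive probability that a good choice of $(\mathbf{a}_j, \mathbf{b}_j)$ exists.
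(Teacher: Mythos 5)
Your construction is identical to the paper's (a bilinear circuit whose AND gates are random products $\langle \mathbf{a}_j,\mathbf{x}_A\rangle\cdot\langle\mathbf{b}_j,\mathbf{x}_B\rangle$ and whose outputs are indexed by a basis of a shortest $n$-dimensional code of distance about $n/2$), and your reduction of the nonlinearity of each $f_T$ to the rank of the bilinear form's matrix, giving $NL(f_T)=2^{n-1}-2^{n-r-1}$ for rank $r$, is also the same. Where you genuinely diverge is in the key probabilistic step. The paper never analyzes $M_T=\sum_j \mathbf{a}_j\mathbf{b}_j^{\top}$ directly; instead it performs a two-stage elimination --- first conditioning on $\operatorname{rank}\{R_1,\ldots,R_q\}$ being large, rewriting $r$ as a sum of $t$ products with linearly independent right factors and still-uniform left factors, then repeating on the left side --- and applies \ref{avishayrankprob} to a uniform square submatrix at each stage, losing $\frac{3\sqrt{n}}{2}$ in the rank twice and obtaining a failure probability of $2^{-7n/4}$ per codeword. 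You instead observe that $M_T$ is not uniform but compute its Fourier coefficients exactly, $\widehat{\mu}(C)=2^{-\operatorname{rank}(C)\cdot|\operatorname{supp}(\mathbf{v}_T)|}$, use the code's distance (weight at least about $n/2$) together with the count of rank-$r$ matrices to show the collision probability is $O(2^{-pq})$, and then get $\Pr[\operatorname{rank}(M_T)\leq d]\leq 2^{-(p-d)(q-d)/2+o(n)}$ by Cauchy--Schwarz. I checked the computation: the character sum does factor gate by gate, the geometric series $\sum_r 2^{r(p+q-r-2w)}$ does converge precisely because $2w\geq n-1$, and at $d=n/2-3\sqrt{n}$ your tail bound is roughly $2^{-9n/2}$ versus the paper's $2^{-7n/4}$, so the union bound over $2^n-1$ codewords goes through with room to spare. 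Your argument is arguably stronger: the paper explicitly remarks that its proof cannot be pushed to a constant $c=2$ in the exponent $2^{n/2+c\sqrt{n}}$, whereas your $L^2$ estimate would tolerate any $c>\sqrt{2}$. The only loose ends are cosmetic: the $O(\log n)$ in your exponent is really $O(\sqrt{n})$ once $p,q,d$ are rounded to integers (harmless), and like the paper you should say a word about $L(n,\frac{n-1}{2})$ versus $L(n,\lceil\frac{n-1}{2}\rceil)$, which the paper dismisses as elementary.
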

\begin{proof}
For simplicity we show the upper bound for $L(n,\frac{n}{2})$ AND gates. It is elementary
to verify that it holds for $L(n,\frac{n-1}{2})$ AND gates as well.
We give a probabilistic construction of a 
quadratic (in fact, bilinear) circuit with $s=L(n,\frac{n}{2})$ AND gates, then we show that with high
probability, the function computed by this circuit
has the desired nonlinearity. 

For the construction of the circuit, we first define the value
computed by the $i$th AND gate as
$A_i(\mathbf{x})=L_i(\mathbf{x})R_i(\mathbf{x})$
where $L_i$ is a random sum over ${x}_1,\ldots ,{x}_{n/2}$
and $R_i$ is a random sum over ${x}_{n/2+1},\ldots ,{x}_n$.
In the following, we will identify
sums over $\mathbf{x}_1,\ldots , \mathbf{x}_n$ with
vectors in $\bbF_2^n$ 
and
sums over $A_1,\ldots ,A_s$ with vectors in $\bbF_2^s$.

Let $\mathcal{C}$ be an $n$-dimensional code of length $L(n,\frac{n}{2})$ with 
distance $\frac{n}{2}$ and let 
$\mathbf{y}_1,\ldots,\mathbf{y}_n \in \bbF_2^s$
be a basis for $\mathcal{C}$. Now we define the corresponding
sums over $A_1,\ldots ,A_s$ to be the outputs computed by the circuit. 
This completes the construction of the circuit. 
Now fix
$r(\mathbf{x})\in \spa_{\bbF_2}\{\mathbf{y}_1,\ldots ,\mathbf{y}_n\}$,
$r\neq \mathbf{0}$. We want to show that $r$ has the desired nonlinearity
with high probability.
By an appropriate relabeling of the AND gates, we can write $r$ as
\begin{equation}
\label{eq:rexpression}
r(\mathbf{x})= \sum_{i=1}^qA_i(\mathbf{x})=
\sum_{i=1}^qL_i(\mathbf{x})R_i(\mathbf{x})
\end{equation}
for some $q\geq \frac{n}{2}$. We now assume that
\begin{equation}
  \label{eq:assumptionhighrank1}
t=\mathrm{rk}\{ R_1,\ldots , R_{q} \} \geq \frac{n}{2}-\frac{3\sqrt{n}}{2}.
\end{equation}
At the end of the proof, we will show that this is true with high probability.
Again by an appropriate relabeling, we let
$\{R_1,\ldots ,R_t\}$ be a basis of $\spa\{R_1,\ldots ,R_q\}$.
If $q>t$, for $j>t$, we can write $R_j=\sum_{i=1}^t\alpha_{j,i}R_i$. In particular for $j=q$,
we can substitute this into Eqn.~\ref{eq:rexpression} and obtain

\begin{eqnarray*}
r(\mathbf{x}) & = &
\sum_i^{q-1}\left(
L_i(\mathbf{x})+\alpha_{q,i}L_{q}(\mathbf{x})
\right) R_i(\mathbf{x})
\end{eqnarray*}
where we let $\alpha_{q,i}=0$ for $i>t$. 
If $\{ L_1,\ldots ,L_q\}$ are independently, uniformly randomly
distributed, then
so are $\{ L_1+\alpha_{q,1}L_{q},\ldots , L_{q-1}+\alpha_{q,{q-1}}L_{q}\}$.
Continuing this process, we get that for
$\frac{n}{2}\geq t\geq \frac{n}{2}-\frac{3\sqrt{n}}{2}$,
there are sums $L_1',\ldots ,L_{t}',R_1',\ldots ,R_{t}'$
such that
\[
r(\mathbf{x})= \sum_{i=1}^{t}L_i'(\mathbf{x})R_{i}'(\mathbf{x})
\]
where the $\{L_1',\ldots ,L_{t}' \}$ are independently, uniformly random
and the $\{R_1',\ldots ,R_{t}' \}$ are linearly independent.
 We now further assume that 
\begin{equation}
\label{eq:assumptionhighrank2}
u=\mathrm{rk}(L_1',\ldots ,L_{t}')\geq t-\frac{3\sqrt{n}}{2}.
\end{equation}
Again, we will show at the end of this proof that this is true with
high probability. Applying a similar procedure as above, we get that
for some
\[
u
\geq t-\frac{3\sqrt{n}}{2}
\geq \frac{n}{2} -3\sqrt{n}
\]
there exist sums
$\widetilde{L}_1,\ldots , \widetilde{L}_{u}$ and
$\widetilde{R}_1,\ldots , \widetilde{R}_{u}$, such that
\[
r(\mathbf{x})=\sum_{i=1}^{u}\widetilde{L}_i(\mathbf{x})\widetilde{R}_i(\mathbf{x}),
\]
where all
$\widetilde{L}_1,\ldots ,\widetilde{L}_{u}$
and all
$\widetilde{R}_1,\ldots \widetilde{R}_{u}$ 
are linearly independent.
Thus, there exists a linear bijection
$(\mathbf{x}_1,\ldots ,\mathbf{x}_n)\mapsto
(\mathbf{z}_1,\ldots ,\mathbf{z}_n)$ with
$\mathbf{z}_1=\widetilde{L}_1,\ldots ,\mathbf{z}_{u}=\widetilde{L}_{u},
\mathbf{z}_{{u}+1}=\widetilde{R}_1,\ldots ,
\mathbf{z}_{2{u}}=\widetilde{R}_{u}$, such that 
\[
\widetilde{r}(\mathbf{z})
=
\mathbf{z}_1\mathbf{z}_{u+1}+\ldots ,\mathbf{z}_{{u}}\mathbf{z}_{2u}
\] 
where $r$ and $\widetilde{r}$ are equivalent up to a linear bijection on the inputs.
Since nonlinearity is invariant under linear bijections, we just need
to determine the nonlinearity of $\widetilde{r}$. 
Given that the inner product, $IP_n$, is a bent function,
it is elementary to verify that 

\[
NL(\widetilde{r})=
2^{n-2u}\left(2^{2u-1}-2^{u-1} \right)=
2^{n-1} - 2^{n-u-1}.
\]
If $u\geq \frac{n}{2}-3\sqrt{n}$, this is at least $2^{n-1} - 2^{\frac{n}{2}+3\sqrt{n}-1}$.

Now it remains to show that the probability of either
Assumption~(\ref{eq:assumptionhighrank1}) or
(\ref{eq:assumptionhighrank2}) occurring is so small that a union bound
over all the $2^n-1$
choices of $r$ gives that with high probability,
 \emph{every} $r\in \spa \{ \mathbf{y}_1,\ldots ,\mathbf{y}_n\}$
has at least the desired nonlinearity.

For Assumption~(\ref{eq:assumptionhighrank1}), we can think of the $q\geq \frac{n}{2}$
vectors $R_1,\ldots ,R_{q}$ as rows in a $q\times \frac{n}{2}$ matrix.
We will consider the  upper left $\frac{n}{2} \times \frac{n}{2}$
submatrix.
By Lemma~\ref{avishayrankprob}
this has rank at most $\frac{n}{2}-\frac{3\sqrt{n}}{2}$
with probability at most
\[
2^{\frac{n}{2}-\left(\frac{n}{2} - (\frac{n}{2}-\frac{3\sqrt{n}}{2}) \right)^2}
=
2^{\frac{n}{2} - \frac{9n}{4}}=
2^{-\frac{7n}{4}}.
\]
Similarly for Assumption~(\ref{eq:assumptionhighrank2}), we can consider the 
$\frac{n}{2}\geq t\geq \frac{n}{2}- \frac{3\sqrt{n}}{2}$
vectors $L_1',\ldots , L_{t}'$
as the rows in a $t\times \frac{n}{2}$ matrix.
Consider the top left $t\times t$ submatrix.
Again, by Lemma~\ref{avishayrankprob},
the probability of this matrix having rank at most
$t-\frac{3\sqrt{n}}{2}$ is at most
\[
2^{t-\left(t - (t-\frac{3\sqrt{n}}{2}) \right)^2}
\leq
2^{\frac{n}{2}-\frac{9n}{4}}
=
2^{-\frac{7n}{4}}.
\]
There are $2^n-1$ choices of $r$, so by the union bound,
the total probability of at least one of Assumption
(\ref{eq:assumptionhighrank1}) or (\ref{eq:assumptionhighrank2})
failing for a least one choice is at most
$2\cdot (2^n-1)\cdot 2^{-\frac{7n}{4}}$, which
tends to zero, so in fact the described construction will have
the desired nonlinearity with high probability.
\end{proof}
We should note that it is not hard to improve in the constants in the proof
and show that in fact the described function has nonlinearity at least
$2^{n-1}-2^{\frac{n}{2}+c\sqrt{n}}$ for some constant $c<3$. However,
the proof given does not allow improvement to e.g. $c=2$.

It follows from the Gilbert-Varshamov bound Theorem~\ref{thm:gilbertvar}) that $L(n,\frac{n-1}{2})<2.95n$ for large enough $n$. 

\begin{corollary}
For sufficiently large $n$ there exist $(n,n)$-functions with multiplicative complexity at most $2.95n$ with nonlinearity at least $2^{n-1}-2^{\frac{n}{2}+3\sqrt{n}}$.
\end{corollary}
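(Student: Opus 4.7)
The plan is to combine the preceding theorem with the Gilbert-Varshamov bound. The theorem already produces $(n,n)$-functions with multiplicative complexity at most $L(n,\tfrac{n-1}{2})$ and nonlinearity at least $2^{n-1}-2^{n/2+3\sqrt{n}-1}$, which is strictly larger than $2^{n-1}-2^{n/2+3\sqrt{n}}$. So the entire content of the corollary reduces to bounding $L(n,\tfrac{n-1}{2})\leq 2.95n$ for all sufficiently large $n$.

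For this, I would set $s=\lceil 2.95n\rceil$, $m=n$, and $d=\lceil (n-1)/2\rceil$ and verify the hypothesis $\sum_{i=0}^{d-2}\binom{s-1}{i}<2^{s-m}$ of \ref{thm:gilbertvar}. Using the standard entropy bound for partial binomial sums, $\sum_{i=0}^{k}\binom{N}{i}\leq 2^{NH_2(k/N)}$ valid whenever $k/N\leq 1/2$, we obtain with $N=s-1$ and $k=d-2$ the estimate
\[
\sum_{i=0}^{d-2}\binom{s-1}{i}\;\leq\;2^{(s-1)\,H_2((d-2)/(s-1))}\;=\;2^{\,s\,H_2(1/(2\cdot 2.95))\,(1+o(1))}.
\]
Since $1/(2\cdot 2.95)\approx 0.1695$ and $H_2(0.1695)\approx 0.657$, the exponent is approximately $2.95n\cdot 0.657\approx 1.938n$, which is strictly less than $s-m=1.95n$ for large $n$. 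Hence the Gilbert-Varshamov hypothesis is satisfied, yielding a linear code of dimension $n$, length at most $2.95n$, and distance $\tfrac{n-1}{2}$; this establishes $L(n,\tfrac{n-1}{2})\leq 2.95n$.

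Combining the two ingredients gives the claim. The only non-routine part is the numerical verification $2.95\cdot H_2(1/5.9)<1.95$, which is a one-line entropy computation; everything else is a direct invocation of already-proved results.
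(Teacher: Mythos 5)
Your proposal is correct and follows exactly the route the paper takes: the corollary is the preceding theorem combined with the Gilbert--Varshamov estimate $L(n,\tfrac{n-1}{2})<2.95n$, which the paper asserts without calculation. Your entropy-bound verification ($2.95\cdot H_2(1/5.9)\approx 1.937<1.95$) correctly fills in the numerical detail the paper leaves implicit.
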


\section{Multiplicative Complexity of an Explicit Vector Valued Function}
The multiplicative complexity of any
$(n,1)$-function
is between $0$ and $\left(1+o(1)\right)2^{n/2}$, as shown in \cite{nechiporuk1962}, (see also \cite{juknabook}), and a random function has multiplicative complexity at least $2^{n/2}-O(n)$~\citep{DBLP:journals/tcs/BoyarPP00} with probability $1-o(1)$.
However, there is no value of $n$ where a concrete $(n,1)$-function 
has been exhibited with a proof that more than $n-1$
AND gates are necessary to compute it.
A lower bound of $n-1$ follows by the simple \emph{degree bound}\footnote{Notice
that despite the name, this is not the same as
Strassen's degree bound as described in
\cite{strassen1973berechnungskomplexitat} and Chapter
8 of \cite{DBLP:books/daglib/0090316}.
}:
a function with degree $d$ has
multiplicative complexity at least $d-1$ \citep{Schnorr88}.

In this section, we first show that repeated use of the degree bound is sufficient to prove that an explicit function has multiplicative complexity at least $2n-3$. 

Furthermore, we show that any $(n,m)$-function has multiplicative complexity at most  $2.5(1+o_m(1))\sqrt{m2^n}$ and that this is tight up to a small constant factor.

\subsection{A Lower Bound for an Explicit Function}
Here we show that repeated use of the
degree bound gives a concrete $(n,n)$-function, exhibiting
a lower bound of $2n-3$. To the best of our knowledge this is
the best lower bound on the multiplicative complexity
for $(n,n)$-functions.
\begin{theorem}
\label{thm:2nlobo}
  The $(n,n)$-function $f$ defined as
$
f_i(\mathbf{x})= \prod_{j\in [n]\setminus \{ i\}} x_j
$, 
has multiplicative complexity at least $2n-3$.
\end{theorem}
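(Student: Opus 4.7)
The plan is to prove the bound by induction on $n$, analyzing an optimal circuit via its last AND gate. Given a circuit $C$ for $f$ with $s$ AND gates $A_1,\ldots,A_s$ in topological order, every output admits the unique decomposition $f_i = L_i + \beta_i A_s$, where $\beta_i \in \bbF_2$ and $L_i$ is computed by the sub-circuit $C'$ with $s-1$ AND gates obtained by deleting $A_s$. The induction then proceeds by case analysis on $T = \{i \in [n] : \beta_i = 1\}$.

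If $T = \emptyset$ then $A_s$ is redundant, contradicting optimality. If $|T| \geq 2$, pick $i,j \in T$; then $C'$ computes $L_i + L_j = f_i + f_j = (x_i + x_j)\prod_{k \neq i, j} x_k$, a polynomial of degree $n-1$. Since $C'$ must additionally compute the remaining $f_k$ for $k \notin T$, each also of degree $n-1$, one can apply the degree bound twice: once directly to $f_i+f_j$, and once after restricting along $x_i=x_j$, which annihilates $f_i+f_j$ but preserves the degree of the remaining $f_k$'s on the restricted input space. If $|T| = 1$, say $T = \{n\}$, the sub-circuit $C'$ computes $(f_1,\ldots,f_{n-1})$ exactly on the original $n$ variables; restricting to $x_n = 1$ reduces this to the $(n{-}1,n{-}1)$-function of the same form, yielding $s - 1 \geq 2(n-1) - 3 = 2n - 5$ by induction. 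The additional AND gate then comes from the complementary restriction $x_n = 0$, under which only $f_n$ survives in $C$ as $\prod_{j<n}x_j$ of degree $n - 1$, forcing an active AND gate not already accounted for under $x_n = 1$.

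The base case can be verified directly: for small $n$ one argues by a rank count in the space of multilinear polynomials of bounded degree, using that the $f_i$ have linearly independent top-degree parts while each AND gate contributes at most one new direction. The main obstacle is to obtain the $+2$ improvement in the induction step rather than the $+1$ that a naive single-restriction argument gives. This requires showing either that the two restrictions $x_n = 0$ and $x_n = 1$ each force at least one AND gate that is not active under the other, or equivalently strengthening the inductive hypothesis to cover the $(n, n{-}1)$-variants obtained by dropping a single output from $f$, so that the $|T|=1$ case closes cleanly. I expect the dichotomy between $x_n = 0$ (where every output of $C$ except $f_n$ is annihilated) and $x_n = 1$ (where all $n$ outputs survive) to make such a disjointness statement about the active gates tractable.
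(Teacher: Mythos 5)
Your proposal has a genuine gap, and you have in fact identified it yourself: the $+2$ per induction step is never actually obtained. In the $|T|=1$ case, restricting $C'$ to $x_n=1$ and invoking the inductive hypothesis gives only $s-1\geq 2n-5$, i.e.\ $s\geq 2n-4$, and the claimed extra gate from the $x_n=0$ restriction does not follow from anything you have set up. The degree bound counts AND gates \emph{in a circuit}; restricting a circuit does not delete gates, so the $n-2$ gates forced by $f_n=\prod_{j<n}x_j$ under $x_n=0$ may coincide entirely with gates already counted under $x_n=1$. ``Active under a restriction'' is not a notion that lets you add counts across two restrictions of the same circuit without a separate disjointness argument, and no such argument is given. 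The $|T|\geq 2$ branch has the same problem: applying the degree bound once to $f_i+f_j$ and once after identifying $x_i=x_j$ only yields $2n-3$ if the two collections of gates are disjoint, but the gates producing the degree-$(n-1)$ part of $f_i+f_j$ may be exactly the gates producing the degree-$(n-2)$ parts of the surviving outputs after the identification. So neither case closes.

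The paper avoids induction and restrictions altogether, and the way it gets two disjoint sets of gates is worth internalizing. It takes the \emph{first} AND gate $A$ (in topological order) of degree at least $n-1$; the degree bound applied to the subcircuit of $A$ yields $p\geq n-3$ AND gates below $A$, all of degree at most $n-2$. Separately, mapping each gate to the vector of coefficients of the $n$ monomials $\prod_{k\neq i}x_k$ in its ANF, the gates of degree at most $n-2$ map to zero, while the outputs $f_1,\ldots,f_n$ map to $n$ linearly independent vectors; hence at least $q\geq n$ gates of degree at least $n-1$ are needed. The two sets are disjoint because one consists of gates of degree at most $n-2$ and the other of degree at least $n-1$, giving $p+q\geq 2n-3$. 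Note that your remark about the base case --- that the $f_i$ have linearly independent top-degree parts and each AND gate contributes at most one new direction --- is essentially half of the paper's argument; promoted from the base case to the main engine and combined with the degree bound on the first high-degree gate, it proves the theorem directly. If you want to salvage the inductive route, you would need to strengthen the hypothesis (as you suggest) and then actually prove the disjointness of active gates across the two restrictions, which is the hard part you have deferred rather than solved.
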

\begin{proof}
Consider the first AND gate, $A$, with degree 
at least $n-1$. Such a gate exists since the outputs have degree $n-1$.
 By the degree bound, $A$ must have 
at least $p\geq n-3$ AND gates with degree at most $n-2$ in its
subcircuit. 
 Call these AND gates $A_1,\ldots ,A_p$.
None of these AND gates can be an output gate
 since they all compute functions of degree at most $n-2$
 and all outputs have degree $n-1$. Suppose
there are $q$ additional AND gates (including $A$), where some of these must have degree at least $n-1$.
Call these AND gates $B_1,\ldots , B_q$. Then, for every $i\in [n]$,
there exist $P_i\subseteq [p]$, $Q_i\subseteq [q]$, and $X_i\subseteq [n]$ such that
\[
f_i=\sum_{j\in P_i}A_j+\sum_{j\in Q_i}B_j+\sum_{j\in X_i}x_j.
\]
We can think of each $B_j$ (resp. $A_j$) as a vector in $\bbF_2^n$, where the $i$th coordinate is $1$ if the term
$\prod_{k\in [n]\setminus \{i \}}x_k$ is present
in the algebraic normal form of the function computed by $B_j$
(resp. $A_j$). Since
each $A_j$ has degree at most $n-2$, all the $A_j$ are zero vectors
in this representation, so
$\spa (A_1,\ldots ,A_p,B_1,\ldots ,B_q)=\spa (B_1,\ldots ,B_q)$.
It follows that
\[
\{f_1,\ldots,f_n \} \subseteq \spa (B_1,\ldots ,B_q).
\]
Therefore,
\[
n=\dim(\{f_1,\dots ,f_n \}) \leq \dim(\spa (B_1,\ldots ,B_q))\leq q.
\]
We conclude that the circuit has at least $q+p\geq 2n-3$ AND gates.
\end{proof}

The multiplicative complexity of the function is at most $3n-6$. This can be seen from the following construction:

\begin{enumerate}
\item Use $n-3$ AND gates with the following outputs:
\[ A'=\{ x_1x_2, x_1x_2x_3,...,x_1x_2\ldots x_{n-2}\}.\]
From these, produce the output, $x_1x_2\ldots x_{n-1}$, 
called $A$, in the previous proof, with one additional AND gate.
Note that no other AND gates are used in the subcircuit computing $A$,
so the following gates are among $(B_1,\ldots ,B_q)$ from that proof.
\item Use $n-2$ AND gates with the following outputs:
\[ B'=\{ x_2x_3\ldots x_n,x_3x_4\ldots x_n,...,x_{n-1}x_n\}.\]
\item Use $n-4$ AND gates to AND together the $i$th element
of $A'$ with the $i+2$nd element of $B'$, for $1\leq i\leq i-4$.
\item Compute $x_1 \cdot (x_3x_4\ldots x_n)$ and
$(x_1x_2\ldots x_{n-2}) \cdot x_n$.
\end{enumerate}
We leave it as an interesting open question to close this gap.

\subsection{Multiplicative Complexity of $(n,m)$-functions: Upper and Lower bounds}
Below we look at the multiplicative complexity of the hardest
$(n,m)$-functions. We give a construction showing that any such function has multiplicative complexity at most 
$2.5(1+o_m(1))\sqrt{m2^n}$, where $o_m(1)$ denotes a function that tends to $0$ when $m$ goes to infinity. For some values of $n,m$ the construction gives the slightly better bound $2(1+o_m(1))\sqrt{m2^n}$. A counting argument shows that this is at most a small factor from being tight.

\begin{theorem}
 Let $f$ be a random $(n,m)$-function, $m\leq 2^n$. Then, almost every $f$ satisfies,
\[
 c_\wedge(f)\geq \sqrt{m 2^n}-2n- \frac{m}{2}.
\]
\end{theorem}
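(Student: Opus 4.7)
The approach is a Shannon-style counting argument. I will upper bound the number $N(s)$ of distinct $(n,m)$-functions computable by an XOR--AND circuit with $s$ AND gates, and then check that at $s^{*} = \sqrt{m2^{n}} - 2n - m/2$ we already have $\sum_{s < s^{*}} N(s) = o\bigl(2^{m 2^{n}}\bigr)$. Since the total number of $(n,m)$-functions is $2^{m 2^{n}}$, this gives the claim that almost every $f$ needs strictly more than $s^{*}-1$ AND gates.

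For the counting step, I would fix a topological ordering $A_{1},\ldots,A_{s}$ of the AND gates of an optimal circuit. Each of the two inputs to $A_{i}$ is an $\bbF_{2}$-affine form in $1, x_{1},\ldots,x_{n}, A_{1},\ldots,A_{i-1}$, of which there are at most $2^{n+i}$, and each of the $m$ output bits is an affine form in $1, x_{1},\ldots,x_{n}, A_{1},\ldots,A_{s}$, of which there are at most $2^{n+s+1}$. Multiplying these factors,
\[
N(s) \;\le\; 2^{m(n+s+1)} \prod_{i=1}^{s} 2^{2(n+i)} \;=\; 2^{\,s^{2} + s(2n+m+1) + m(n+1)}.
\]

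Next I would substitute $s = s^{*}$ and expand. Writing $T = \sqrt{m 2^{n}}$ so that $s^{*} + 2n + m/2 = T$, the quadratic part of the exponent simplifies to
\[
(s^{*})^{2} + s^{*}(2n+m) \;=\; s^{*}\bigl(s^{*}+2n+m\bigr) \;=\; \bigl(T - 2n - m/2\bigr)\bigl(T + m/2\bigr) \;=\; T^{2} - 2nT - nm - m^{2}/4.
\]
Adding the remaining linear-in-$s^{*}$ contributions $s^{*} + m(n+1)$ leaves $N(s^{*}) \le 2^{T^{2} - 2nT + O(nm + m^{2})}$, which is $o\bigl(2^{T^{2}}\bigr) = o\bigl(2^{m 2^{n}}\bigr)$ in the regime where $2nT$ dominates the residual error. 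A union bound over $s = 0, 1, \ldots, s^{*} - 1$ costs only a factor of $s^{*}$, which the slack absorbs.

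The main obstacle is calibrating the threshold: the constants $2n$ and $m/2$ in the statement are chosen precisely so that the substitution in $s(s+2n+m)$ matches the $T^{2}$ coefficient of $m 2^{n}$ exactly while producing a strictly negative linear-in-$T$ correction large enough to beat the $O(nm + m^{2})$ remainder. Everything else is routine bookkeeping: canonicalizing the circuit, counting affine forms available at each gate, and the final union bound.
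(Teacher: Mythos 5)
Your proposal is correct and follows essentially the same Shannon-style counting argument as the paper, which bounds the number of functions computable with $k$ AND gates by $2^{k^2+k(m+2n-2)+m(n+1)}$ and then substitutes $k=\sqrt{m2^n}-2n-m/2$ to see the exponent fall short of $m2^n$ by roughly $2n\sqrt{m2^n}$. One minor bookkeeping slip: after extracting $T^2-2nT$ the leftover also contains a $+s^*\approx +T$ term, so it is not $O(nm+m^2)$ as written, but this is still dominated by the $-2nT$ term and the conclusion stands.
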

This proof is similar to the proof of Lemma~15 in \cite{DBLP:journals/tcs/BoyarPP00}.
\begin{proof}
First we give an upper bound on the number of functions that can be computed with circuits using at most $k$ AND gates. Let $A_1,A_2,\ldots ,A_k$ be some topological ordering of the AND gates.
The two inputs to $A_i$ are the XORs of some of the previous AND gates, and some of the variables in the circuit. Without loss of generality, we assume that the constant $1$ is only used at the output gates of the circuit.
Thus, the number of choices for $A_i$ is
\[
\left( 2^{i-1+n} \right)^2/2=
 2^{2i+2n-3}.
\]
Each output gate of the circuit is the XOR of some of the AND gates, some of the variables to the circuit, and possibly the constant $1$. That is, for $k\geq 0$ the total number of ways to choose the inputs to the AND gates and the outputs is:
\[
\left(2^{k+n+1}\right)^m\prod_{i=1}^k  2^{2i+2n-3}
=
2^{k^2+m \cdot (1+n)+k\cdot(m+2n-2)}.
\]
Since we assume that $m\leq 2^n$, we have that for sufficiently large $n$,
 $\sqrt{m 2^n}-2n- \frac{m}{2}>0$ ($n\geq 12$ suffices). So the number of
$(n,m)$-functions with multiplicative complexity at most
$\sqrt{m 2^n}-2n- \frac{m}{2}$ is at most 
\begin{align*}
2^{m2^n +2m-m^2/4+4n-\sqrt{m}2^{n/2+1}(1+n)}.
\end{align*}
There are $2^{m2^n}$ different $(n,m)$-functions, so the probability that a random
$(n,m)$-function can be computed with a circuit with at most 
$\sqrt{m 2^n}-2n- \frac{m}{2}$ AND gates is at most
\[
2^{2m-m^2/4+4n-\sqrt{m}2^{n/2+1}(1+n)},
\]
which tends to $0$ when $n$ goes to infinity.
\end{proof}

On the other hand we present an almost matching upper bound.
The technique has similarities to those used in \citep{lupanov1958method}.

\begin{theorem}
\label{thm:nmfunctionsmcupperbound}
Let $f$ be an $(n,m)$-function. If $\log m$ is an integer and $n+\log m$ is even, then
\[
c_\wedge(f) \leq 
2 (1+o_m(1))\sqrt{m2^n},
\]
otherwise 
\[
c_\wedge(f) \leq 
 2.5(1+o_m(1))\sqrt{m2^n}.
\]
\end{theorem}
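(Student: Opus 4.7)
The plan is a Lupanov-style decomposition that amortizes minterm computations across all $m$ outputs. Fix a parameter $k \in \{0,1,\ldots,n\}$ to be chosen later, and split the input as $\mathbf{z} = (\mathbf{x}, \mathbf{y})$ with $|\mathbf{x}| = n - k$ and $|\mathbf{y}| = k$. For $\alpha \in \bbF_2^{n-k}$ let $M_\alpha(\mathbf{x}) := \prod_{i : \alpha_i = 1} x_i \prod_{i : \alpha_i = 0}(1 + x_i)$ and define $N_\beta(\mathbf{y})$ for $\beta \in \bbF_2^{k}$ analogously. Since exactly one minterm is nonzero on any input, every $(n,1)$-function $g$ has the decomposition
\[
g(\mathbf{x}, \mathbf{y}) = \sum_{\beta \in \bbF_2^k} N_\beta(\mathbf{y})\, h_{g,\beta}(\mathbf{x}), \qquad h_{g,\beta}(\mathbf{x}) := \sum_{\alpha \in \bbF_2^{n-k}} g(\alpha, \beta)\, M_\alpha(\mathbf{x}).
\]

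First I would compute all $2^{n-k}$ of the $M_\alpha$'s and all $2^{k}$ of the $N_\beta$'s via the standard doubling recursion: given all minterms on $x_1,\ldots,x_{t-1}$, AND each with $x_t$ to produce the ``$\alpha_t=1$'' minterms, and obtain the ``$\alpha_t=0$'' ones for free from $M_\alpha(1+x_t) = M_\alpha + M_\alpha x_t$. This costs at most $2^t - 2$ AND gates on $t$ variables, so the minterm stage uses at most $2^{n-k} + 2^{k}$ AND gates. Next I would apply the decomposition above to each output $f_i$: each $h_{f_i,\beta}$ is a linear combination of already-computed $M_\alpha$'s (free), while each product $N_\beta \cdot h_{f_i,\beta}$ costs one AND. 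Summed over $i \in [m]$ and $\beta \in \bbF_2^k$, the output-assembly stage uses at most $m \cdot 2^k$ AND gates.

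The total AND count is thus bounded by $T(k) := 2^{n-k} + 2^{k} + m \cdot 2^{k}$. The dominant terms are $2^{n-k}$ and $m \cdot 2^{k}$; balancing them gives $k^* = \tfrac{1}{2}(n - \log m)$, at which both equal $\sqrt{m\,2^n}$, and the stray $2^k$ contributes a factor $1 + 1/m$ that is absorbed into $o_m(1)$, yielding $T(k^*) = 2(1+o_m(1))\sqrt{m\,2^n}$. One checks that $k^*$ is an integer in $[0,n]$ exactly when $\log m \in \mathbb{Z}$ and $n + \log m$ is even, which is the hypothesis of the first bound. In general we round $k^*$ to the nearest integer; a displacement $|\delta| \leq \tfrac{1}{2}$ scales $2^{n-k}$ by $2^{-\delta}$ and $m\cdot 2^{k}$ by $2^{\delta}$, so the worst-case total is at most $\bigl(2^{1/2} + 2^{-1/2}\bigr)\sqrt{m\,2^n} = \tfrac{3}{\sqrt{2}}\sqrt{m\,2^n} < 2.5\,\sqrt{m\,2^n}$, giving the second bound.

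The main technical obstacle is making the $(1+o_m(1))$ accounting airtight---in particular, confirming that the $2^k$ overhead in the minterm stage and any integer-rounding slack in $k$ contribute only a $(1 + o_m(1))$ factor as $m \to \infty$---and handling the boundary cases in which $k^*$ would fall outside $[0,n]$, e.g.\ when $m$ is very close to $2^n$. These verifications are straightforward but require some care, and no further conceptual ingredient is needed.
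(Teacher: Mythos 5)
Your proposal is correct and follows essentially the same route as the paper: split the inputs into a block of $k$ and a block of $n-k$ variables, compute all indicator functions (your minterms $M_\alpha$, $N_\beta$) on each block, form the $m\cdot 2^k$ products $N_\beta\cdot h_{f_i,\beta}$, and balance $2^{n-k}$ against $m2^k$ by choosing $k\approx\tfrac{1}{2}(n-\log m)$. The only differences are cosmetic---your doubling recursion for the minterm stage versus the paper's monomial-based construction, and nearest-integer rounding (giving $3/\sqrt{2}$) versus the paper's ceiling (giving $2.5$)---both of which yield the stated bounds.
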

Before presenting the proof, we define indicator functions, and a result about their
multiplicative complexity.

\begin{definition}
 For every $n\in \mathbb{N}$ and $\mathbf{z}\in \bbF_2^n$ the indicator function
$I_\mathbf{z}\mapnbit$ is defined as $I_\mathbf{z}(\mathbf{x})=1$ if and only if
$\mathbf{z=x}$.
\end{definition}
The following simple proposition on the multiplicative complexity of indicator functions
will be helpful in the proof.

\begin{proposition}
Let $n>1$ be arbitrary. Define the $(n,2^n)$-function
\[
 AI_n(\mathbf{x})
=
(
I_{(0,0,\dots , 0)}
(\mathbf{x}),
I_{(1,0,\dots , 0)}
(\mathbf{x}), 
\ldots,
I_{(1,1,\dots , 1)}
(\mathbf{x})
).
\]
Then the multiplicative complexity of $AI_n$ is 
$c_\wedge(AI_n)=2^{n}-n-1$
\end{proposition}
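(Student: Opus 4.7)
The plan is to prove matching upper and lower bounds of $2^n-n-1$.

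For the upper bound, I would compute every nonlinear monomial $M_S=\prod_{i\in S}x_i$ with $S\subseteq [n]$ and $|S|\geq 2$ directly. These can be generated in order of increasing size: each new monomial $M_S$ is obtained by a single AND gate $M_{S\setminus\{i\}}\wedge x_i$ for an arbitrary $i\in S$, where $M_{S\setminus\{i\}}$ is either an input (when $|S|=2$) or has been computed at an earlier step. The total count is $\sum_{k=2}^{n}\binom{n}{k}=2^n-n-1$. Each indicator then expands as
\[
I_\mathbf{z}(\mathbf{x})=\prod_{i=1}^{n}\bigl(x_i+z_i+1\bigr)=\sum_{S\subseteq [n]}\Bigl(\prod_{i\notin S}(z_i+1)\Bigr)\,M_S,
\]
an $\bbF_2$-linear combination of the computed monomials together with the variables and the constant $1$, so every output of $AI_n$ is realized by a XOR of already-available wires without introducing further AND gates.

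For the lower bound, I would use a simple linear-algebra argument. Fix any XOR-AND circuit computing $AI_n$ with AND gates $A_1,\ldots,A_s$. Since every gate in an XOR-AND circuit is either an AND gate or a (possibly unbounded-fanin) XOR of previous wires, a straightforward induction on the circuit's topological order shows that every wire is an $\bbF_2$-linear combination of $x_1,\ldots,x_n$, the constant $1$, and $A_1,\ldots,A_s$, viewed as truth tables in $\bbF_2^{2^n}$. In particular, each output $I_\mathbf{z}$ lies in the $\bbF_2$-span $V$ of those $n+1+s$ vectors. But the truth tables $\{I_\mathbf{z}:\mathbf{z}\in\bbF_2^n\}$ are precisely the standard basis of $\bbF_2^{2^n}$, and hence $V=\bbF_2^{2^n}$. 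Comparing dimensions gives $s+n+1\geq 2^n$, i.e.\ $s\geq 2^n-n-1$.

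Neither direction presents a real obstacle — the upper bound is a direct inductive construction, and the lower bound is a dimension count — but the one subtlety worth stating carefully is that in an XOR-AND circuit every wire is $\bbF_2$-linear in the inputs, the constant, and the AND-gate outputs. This is what lets the $n+1$ ``free'' linear dimensions be separated from the AND-gate contribution and makes the bound tight.
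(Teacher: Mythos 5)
Your proposal is correct and follows essentially the same route as the paper: the upper bound builds all monomials of degree at least $2$ one AND gate at a time, and the lower bound is the same dimension count showing the $2^n$ linearly independent indicators must lie in the span of the inputs, the constant, and the AND-gate outputs. The only (cosmetic) difference is that you work with truth tables in $\bbF_2^{2^n}$ while the paper uses ANF coefficient vectors; both give the identical inequality $s+n+1\geq 2^n$.
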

\begin{proof}
 First, we show that $c_\wedge(AI_n)\leq 2^{n}-n-1$.
 We start by computing all quadratic terms, that is terms on the form
 ${x}_i{x}_j$ for $1\leq i<j\leq n$. This can be done using one
 AND gate for each of the $\binom{n}{2}$ terms. Now we compute each degree three term.
 Since each degree three term $x_ix_jx_k$ can be written as $x_iQ$ for some quadratic
 term $Q$, we can do this with one AND gate for each term. We continue in this way until we have computed the term $\prod_{i\in S}x_i$ for each $S\subseteq [n]$.
 The number of AND gates used is 
 \[
  \binom{n}{2} + \binom{n}{3} + \ldots + \binom{n}{n-1}+\binom{n}{n}=2^{n}-n-1.
 \]
 Now that all the terms that can occur in an ANF have been obtained, \emph{any}
 function can be computed without using additional AND gates.
In particular, all the indicator functions can be computed using no additional AND gates.

For the lower bound, suppose $M$ AND gates, $A_1,\ldots ,A_M$, suffice to compute the
$2^n$ indicator functions. Let $T_1,\ldots ,T_{2^n}$ be some ordering
of the terms $\prod_{i\in S}x_i$, $S\subseteq [n]$. 
Each function $f\mapnbit$ can be considered
as a vector in $\bbF_2^{2^n}$ by letting the $i$th coordinate be $1$ if and only if
term $T_i$ is included in the ANF of $f$. Considering all functions as vectors in this
way, it follows that
\[
 \{ I_{\mathbf{z}} | \mathbf{z}\in \bbF_2^n \} \subseteq 
 \spa_{\bbF_2} \{1,x_1,\ldots ,x_{n} , A_1,\ldots , A_M \}.
\]
All the indicator functions are linearly independent, so we have
\[
 2^n = \dim( \{ I_{\mathbf{z}} | \mathbf{z}\in \bbF_2^n \})\leq 
\dim(\spa_{\bbF_2} \{1,x_1,\ldots ,x_{n} , A_1,\ldots , A_M \})
\leq M+n+1.
\]
\end{proof}

\begin{proof}[Proof of Theorem~\ref{thm:nmfunctionsmcupperbound}.]
In the following let $k$ be an integer to be determined later. First compute all the indicator functions on the last $n-k$ variables
$x_{k+1},x_{k+2},\ldots , x_n$. This uses $2^{n-k}-(n-k)-1$ AND gates. Given all these indicators, using only XOR gates, it is possible to compute the function
\[
  f_i(a_1,a_2,\ldots ,a_k,x_{k+1},x_{k+2},\ldots ,x_n),
\]
for any choice of constants $i\in [m]$ and $\mathbf{a}\in \bbF_2^{k}$.
Now compute all the indicator functions on the first $k$ variables using
$2^k-k-1$ AND gates. After this, for each $i\in [m]$, and for each $\mathbf{a}\in \bbF_2^{k}$ compute
\[
g_{i,\mathbf{a}}(\mathbf{x}):=I_{(a_1,\ldots ,a_k)}(x_1,\ldots ,x_k)\cdot f_i(a_1,a_2,\ldots ,a_k,x_{k+1},x_{k+2},\ldots ,x_n).
\]
This uses $m\cdot 2^{k}$ AND gates. Now observe that
\[
 f_i(\mathbf{x})
 =\sum_{\mathbf{a}\in \bbF_2^{k}} g_{i,\mathbf{a}}(\mathbf{x}),
\]
so each $f_i$ can be obtained using only XOR gates. The total number of AND gates used is less than
\[
 m2^k + 2^{n-k} + 2^k.
\]
Suppose that $\log m$ is an integer and $n+\log m$ is even. Letting
$k=\frac{n-\log m}{2}$ results in
\[
 m2^{\frac{n-\log m}{2}} + 2^{\frac{n+\log m}{2}} + 2^{\frac{n-\log m}{2}} = 
 (1+o_m(1))2\sqrt{m2^n}.
\]
Otherwise we let $k=\left\lceil \frac{n-\log m}{2} \right\rceil$.
Let $k=\frac{n-\log m}{2} + \epsilon$. Then the total number of AND gates is at most
\[
 (1+o_m(1))
 \sqrt{m} 2^{ \frac{n}{2}}
 (
  2^{\epsilon}
  +
  2^{- \epsilon }
  )
 \leq
 2.5(1+o_m(1))\sqrt{m}2^{\frac{n}{2}}.
\]
\end{proof}

\section{Open Problems}
Strassen~\citep{strassen1973vermeidung} (see also \cite{DBLP:books/daglib/0090316}, Proposition 14.1, 
p. 351) proved that for an \emph{infinite field}, $\mathbb{K}$, if the quadratic function
$F\colon \mathbb{K}^n\rightarrow \mathbb{K}^m$ 
can be computed with $M$
multiplications/divisions, then it can be computed in $M$ multiplications by a
quadratic circuit.
However, it is unknown whether a
similar result holds for \emph{finite fields} and in particular for $\bbF_2$.
Mirwald and Schnorr \citep{DBLP:journals/tcs/MirwaldS92}
showed that for quadratic $(n,1)$- and $(n,2)$-functions, 
quadratic circuits are optimal. It is still not known
whether this is true for $(n,m)$-functions in general.
 It would be very interesting
to determine if the bounds proven here for quadratic circuits also
hold for general circuits.

When inspecting the proof of Theorem~\ref{thm:nlecc}, 
one can make a weaker assumption on the circuit than it having $\Sigma\Pi\Sigma$ structure.
For example, it is sufficient if it holds that for every AND gate, $A$, there
is a unique AND gate, $A'$ (which might be equal to $A$),
such that every path from $A$ to an output
goes through $A'$.
Can one find a larger, interesting class of circuits where
the proof holds?

The function defined in Theorem~\ref{thm:2nlobo} has multiplicative complexity at
least $2n-3$ and at most $3n-6$. 
What is the exact value?

\section{Acknowledgements}
A preliminary version of this paper appeared in the
proceedings of the 39th International Symposium on. Mathematical Foundations of Computer Science, 2014.
Supported in part by the Danish Council for Independent Research, Natural
Sciences, grant DFF-1323-00247.

\section*{References}

\bibliographystyle{elsarticle-harv} 

\bibliography{journals,nlmc}

 \end{document}